\theoremstyle{plain}
\newtheorem{theorem}{\hskip\parindent Theorem}
\newtheorem{lemma}{\hskip\parindent Lemma}
\newtheorem{corollary}{\hskip\parindent Corollary}
\theoremstyle{definition}
\newtheorem{remark}{\hskip\parindent Remark}
\numberwithin{equation}{section}
\newcommand\Ai{\operatorname{Ai}}
\newcommand\Bi{\operatorname{Bi}}
\renewcommand\Re{\operatorname{Re}}
\renewcommand\Im{\operatorname{Im}}
\newcommand\B {\mathrm{B}}
\begin{document}

\title{Full Asymptotic Expansion of Monodromy Data for the First Painlev\'{e} Transcendent: Applications to Connection Problems}

\author[$\dag$]{Wen-Gao Long\thanks{Corresponding author: longwg@hnust.edu.cn}}

\author[$\dag$]{Yun-Jiang Jiang}

\author[$\ddag$]{Yu-Tian Li}

\affil[$\dag$]{\small School of Mathematics and Statistics,
Hunan University of Science and Technology, Xiangtan, Hunan, 411201, PR China}

\affil[$\ddag$]{\small
School of Science and Engineering, The Chinese University of Hong Kong, Shenzhen, Guangdong, 518172, PR China.}

\date{}

\maketitle


\begin{abstract}
We study the full asymptotic expansion of the monodromy data ({\it i.e.}, Stokes multipliers) for the first Painlev\'{e} transcendent (PI) with large initial data or large pole parameters. Our primary approach involves refining the complex WKB method, also known as the method of uniform asymptotics, to approximate the second-order ODEs derived from PI's Lax pair with higher-order accuracy. As an application, we provide a rigorous proof of the full asymptotic expansion of the nonlinear eigenvalues proposed numerically by Bender, Komijani, and Wang. Additionally, we present the full asymptotic expansion for the pole parameters $(p_{n}, H_{n})$ corresponding to the $n$-th pole of the real tritronqu\'{e}e solution of the PI equation as $n \to +\infty$.

\end{abstract}

\vspace{5mm}
{\it Keywords: }First Painlev\'{e} transcendent,
connection formula, Stokes multiplier, uniform asymptotics, Airy function,

\vspace{5mm}
{\it MSC2020:} 34M40, 33E17, 34A12, 34E05, 33C10

\section{Introduction}

Painlev\'{e} equations are a group of six nonlinear second-order ordinary differential equations
that possess the Painlev\'{e} property. This property requires that the movable singularities of the solutions must be poles, not branch points or essential singularities.
In general, the solutions of these equations do not have explicit expressions. Consequently,
asymptotic analysis is commonly used to study the behavior of solutions to Painlev\'{e} equations. This leads to a natural question: if the asymptotic approximations of a Painlev\'e function
are given in different regimes, such as the variable $t \to -\infty$ and $t \to \infty$,
what is the relation between these asymptotic formulas?
This is known as the connection formula problem in Painlev\'e equation theory,
which has received significant attention in literature;
see for example \cite{APC, DaiHu-2017, FAS-2006, Joshi-Kruskal-1992, Kapaev-Kitaev-1993, Wong-Zhang-2009-PIII, Wong-Zhang-2009-PIV}.

In this work, we consider the connection problem for the first Painlev\'{e} equation:
\begin{equation}\label{PI equation}
\frac{d^{2}y}{dt^{2}} = 6y^2 + t.
\end{equation}
This problem has been previously studied by Joshi and Kruskal~\cite{Joshi-Kruskal-1992}, as well as by Kapaev and Kitaev~\cite{Kapaev-Kitaev-1993}, yielding significant progress.
However, the problem is not yet fully solved and has been highlighted by Clarkson as an open problem on several occasions~\cite{CPA2003,CPA2006}.
Since then, further investigations have been conducted in this direction~\cite{Bender-Komijani-2015,LongLi,LongLiWang}.

The asymptotic behaviors of the PI solutions can be derived in several interesting regimes:
\begin{equation}\label{three-cases}
(\mathrm{i})\  t \to -\infty, \qquad (\mathrm{ii})\  t \to 0,  \qquad (\mathrm{iii})\  t \to p,
\end{equation}
where $p$ is a pole of the PI solution.

Situation (i) has been extensively studied by Kapaev~\cite{AAKapaev-1988} using isomonodromy theory.
According to the asymptotic behavior of $y(t)$ as $t \to -\infty$, the solutions of the PI equation can be classified into three types: Type A (oscillatory solutions), Type B (separatrix solutions), and Type C (singular solutions). These classifications are determined by the Stokes multipliers as follows:
\begin{equation}\label{eq-classifed-by-stokes}
\begin{cases}
\text{Type A (oscillatory solutions)}: & \Im s_{0}=1+s_{2}s_{3}>0;\\
\text{Type B (separatrix solutions)}: & \Im s_{0}=1+s_{2}s_{3}=0;\\
\text{Type C (singular solutions)}:  & \Im s_{0}=1+s_{2}s_{3}<0,
\end{cases}
\end{equation}
where $s_{k}$'s are the Stokes multipliers, which will be explicitly given later in \eqref{eq-Stokes-matrices}.
The detailed asymptotic behaviors of these three types of solutions are provided in Appendix~\ref{sec:AppB} for better clarity and presentation.

Situation (ii)  corresponds to the initial value problem, where the solution admits a MacLaurin series expansion:
\begin{equation}
y(t) = a + bt + O(t^2), \qquad t \to 0,
\end{equation}
with the initial data $(y(0), y'(0)) = (a, b)$ uniquely determining the solution. 

For situation (iii), as $t \to p$, where $p$ is a pole of $y(t)$, the solution admits
a Laurent series expansion:
\begin{equation}\label{eq-Laurent-series}
y(t) = \frac{1}{(t-p)^2} - \frac{p}{10}(t-p)^2 - \frac{1}{6}(t-p)^3 + H(t-p)^4 + \mathcal{O}((t-p)^5), \qquad t \to p.
\end{equation}
Here, $H$ is a free parameter, and the pair $(p, H)$ uniquely determines a PI solution.

The monodromy theory plays
a central role in the study of the connection problem of Painlev\'e transcendents.
For presentation considerations again,
we describe the monodromy theory for PI in Appendix~\ref{sec:AppA}.
In principle, there is a one-to-one correspondence between PI solutions
and Stokes multipliers: a pair of Stokes multipliers uniquely determines a PI solution and
vice versa; see Appendix \ref{sec:AppA}.
One commonly used approach in studying connection problems is
to first establish the relationship between the Stokes multipliers and specific asymptotic
approximations,  as demonstrated by Kapaev~\cite{AAKapaev-1988} for situation (i).
Then, by using the Stokes multipliers as a bridge, the behaviors of a particular PI solution
in different regimes can be connected.

For PI, we are particularly interested in two connection problems:
(I)connection problem between the asymptotic behavior at negative infinity and the initial data;
and (II)connection problem between the asymptotic behaviors at negative infinity and the pole.
To this end, one need to calculate the Stokes multipliers for situations (ii) and (iii). In situation (ii), the Stokes multipliers are studied in \cite{LongLi}. Although it is very challenging to derive the exact expression of the Stokes multipliers in terms of $a$ and $b$, the leading asymptotic approximations of the Stokes multipliers are calculated when $a$ or $b$ is large. The corresponding results are applied to build some limiting-form connection formulas between the initial data and the parameters in the asymptotic expansions at negative infinity. In situation (iii), i.e., when $t \to p$, the PI solution is expressed as a Laurent series \eqref{eq-Laurent-series}. In \cite{LongLiWang}, the authors study the leading asymptotic behavior of the Stokes multipliers $s_{k}$'s as $p$ or $H$ tends to infinity and provide an asymptotic classification of the PI solutions with respect to $(p,H)$.

In the present work, we aim to achieve a deep understanding and refined results of the above two connection problems for PI. The major step is to derive the full asymptotic expansion of the Stokes multipliers in situations (ii) large initial data and (iii) large pole parameters.

\paragraph{Large initial data case}
\

Our motivation for discussing this case comes from \cite{Bender-Komijani-2015, Bender-Komijani-Wang}, where the authors find that there exists a sequence $\{b_{n}\}$ such that the PI solutions with $y(0)=0$ and $y'(0)=b_{n}$ are the separatrix solutions. Similarly, there exists a sequence $\{a_{n}\}$ such that the PI solutions with $y(0)=a_{n}$ and $y'(0)=0$ are also the separatrix solutions.
The values $a_{n}$ and $b_{n}$ are termed as the \emph{nonlinear eigenvalues} for the PI equation.
It is stated in \cite{Bender-Komijani-Wang}, based on numerical simulations, that
\begin{equation}\label{eq-an-bn-approx-numerical}
\begin{split}
a_{n}&\sim -\left[\frac{\sqrt{3\pi}\Gamma\left(\frac{11}{6}\right)}{\Gamma\left(\frac{1}{3}\right)}\right]^{\frac{2}{5}}\left(n-\frac{1}{2}\right)^{\frac{2}{5}}\left[1-\frac{0.0096518}{(n-\frac{1}{2})^2}+\frac{0.0240}{(n-\frac{1}{2})^4}\right], \\
b_{n}&\sim 2\left[\frac{\sqrt{3\pi}\Gamma\left(\frac{11}{6}\right)}{\Gamma\left(\frac{1}{3}\right)}\right]^{\frac{3}{5}}\left(n-\frac{1}{6}\right)^{\frac{3}{5}}\left[1-\frac{0.00551328}{(n-\frac{1}{6})^2}+\frac{0.29334}{(n-\frac{1}{6})^3}\right]
\end{split}
\end{equation}
as $n \to \infty$.
A natural problem is to obtain the full asymptotic expansions of $a_{n}$ and $b_{n}$ as $n \to \infty$. Moreover, eq.~\eqref{eq-an-bn-approx-numerical} suggests that all the odd terms vanish and that $a_{n}$ has a full asymptotic expansion of the form:
\begin{equation}\label{an-full-expansion}
a_{n}\sim -\left[\frac{\sqrt{3\pi}\Gamma\left(\frac{11}{6}\right)}{\Gamma\left(\frac{1}{3}\right)}\right]^{\frac{2}{5}}\left(n-\frac{1}{2}\right)^{\frac{2}{5}}\sum\limits_{k=0}^{\infty}\frac{A_{k}}{\left(n-\frac{1}{2}\right)^{2k}}, \quad n \to \infty.
\end{equation}
It turns out that the answer is affirmative. To provide a rigorous proof of \eqref{an-full-expansion}, we need to calculate the full asymptotic expansion of the Stokes multipliers corresponding to the PI equation as $a \to -\infty$.

\paragraph{Large pole parameter case}
\

A similar question can be proposed in the connection problem of the PI equation between negative infinity and the pole. In \cite{LongLiWang}, the authors obtain the leading asymptotic approximation of the pole parameter $(p_{n}, H_{n})$ for the $n$-th pole of the tritronqu\'{e}e solution of the PI equation. Moreover, numerical simulation shows that (see \cite[p. 6702]{LongLiWang})
\begin{equation}
\label{pn-Hn-full-expansion}
\begin{aligned}
p_{n} &\sim 2C_{0}\left(\frac{4n-2}{\kappa^{2}(C_{0})}\right)^{\frac{4}{5}}\left[1+\frac{0.0045148}{\left(n-\frac{1}{2}\right)^{2}}\right], \\
H_{n} &\sim -\frac{1}{7}\left(\frac{4n-2}{\kappa^{2}(C_{0})}\right)^{\frac{6}{5}}\left[1-\frac{0.0081}{\left(n-\frac{1}{2}\right)^{2}}\right],
\end{aligned}
\end{equation}
as $n \to \infty$, where $C_{0}$ and $\kappa^{2}(C_{0})$ are defined in \cite[Lemma 2.1]{LongLiWang}. A natural question is whether the asymptotic behavior of $p_{n}$ and $H_{n}$ both possess full asymptotic expansions with the odd terms vanishing. To answer this question, the full asymptotic expansion of the Stokes multipliers as $p$ or $H$ tends to infinity is needed.

\

In the study of connection problems for Painlev\'e equations, a frequently employed approach is
the complex WKB method \cite{Voros-1983, Kawai-Takei-2005-book} (also referred to as the method of {\it uniform asymptotics} in \cite{APC}). This method has been widely applied  in the literature \cite{Wong-Zhang-2009-PIII, Wong-Zhang-2009-PIV, LongLiWang, LongZeng, Zeng-Zhao-2015}.
The approach typically consists of two main steps. The first step involves approximating a second-order linear ODE (derived from the Lax pair) uniformly on two adjacent Stokes lines. These uniform asymptotics often utilize special functions such as Airy or parabolic cylinder functions. The second step derives the Stokes multipliers of specific Painlev\'e functions
by using the Stokes phenomena of the special functions employed in the first step.

For the first Painlev\'e equation, when considering the cases as \( t \to 0 \) and \( t \to p \), although it is very difficult to calculate the explicit representation of the Stokes multipliers \( \{s_{k}\} \), it is possible to obtain the asymptotic behavior of \( \{s_{k}\} \) when the initial data or pole parameter is large. However, since \cite[Theorems 1 and 2]{APC} only give the leading approximation of the solutions of the Lax pair, we can only obtain the leading asymptotic behavior of the Stokes multipliers.
To get a full asymptotic expansion of the Stokes multipliers, we need to improve the complex WKB method to obtain a higher-order approximation of the second-order ODEs derived from the Lax pair. This is another motivation for this paper.

The rest of this paper is organized as follows. In Sec.~\ref{sec:main-results}, we state the main results, which are divided into three parts: (1) Higher-order approximations of the Lax pair; (2) Full asymptotic expansions of the Stokes multipliers; (3) Full asymptotic expansions of the nonlinear eigenvalues and pole parameters. The proof of the third part is also included in this section, based on the results in the first two parts. In Sec. \ref{sec:proof-lemma} and Sec. \ref{sec:proof-theorem}, we provide the proofs of the first and second parts, respectively.

\section{Main results}
\label{sec:main-results}
The main results of this paper consist of three parts. First, we obtain the higher-order approximation of the second-order ODEs derived from the Lax pair for the PI equation in case (ii) [\textit{resp.} case (iii)] stated in \eqref{three-cases}. Second, we obtain the full asymptotic expansion of the Stokes multipliers in case (ii) when the initial data [\textit{resp.} the pole parameters] tend to infinity. The third part is an application of the full asymptotic expansions of the Stokes multipliers. We derive the full asymptotic expansions of $a_{n}$, $p_{n}$, and $H_{n}$ as $n\to\infty$, namely eqs.~\eqref{an-full-expansion}
and~\eqref{pn-Hn-full-expansion}.

\subsection{Higher-order approximations of the Lax pair}

If we regard \eqref{eq-system-hat-Phi} as a linear system of ODEs for $\hat{\Phi} = (\phi_1, \phi_2)^T$, where $T$ denotes the transpose, and eliminate the equation for $\phi_2$, then system \eqref{eq-system-hat-Phi} is equivalent to a second-order ODE for $\phi_1$. We then apply the method of uniform asymptotics developed by Bassom \textit{et al.}~\cite{APC} to obtain asymptotic approximations for $\phi_{1}(\lambda)$ with a large parameter. To obtain full asymptotic expansions, rather than just the leading term, we incorporate the method of uniform asymptotics with an idea developed by Dunster~\cite{Dunster-2014}.

\paragraph{Large initial data case}
\

Assume $t\neq p$, where $p$ is any pole of the PI solution $y(t)$.
It follows from \eqref{eq-system-hat-Phi} that $Y$ satisfies a second-order ODE:
\begin{equation}\label{Schrodinger-equation-t-general}
\frac{d^{2}\phi_{1}}{d\lambda^{2}}=\left[y_{t}^{2}+4\lambda^{3}+2\lambda t-2y t-4y^{3}-\frac{y_{t}}{\lambda-y}+\frac{3}{4}\frac{1}{(\lambda-y)^2}\right]\phi_{1}.
\end{equation}
When $t=0$, we introduce a real large parameter $\xi$ defined by
\[
\xi^{\frac{6}{5}}=\frac{y'(0)^2}{4}-y(0)^3=\frac{b^2}{4}-a^3
\]
and set
\[
a=A(\xi)\xi^{\frac{2}{5}} \quad\text{and}\quad
b=B(\xi)\xi^{\frac{3}{5}},\]
where $A(\xi), B(\xi)$ are both bounded. Then $\frac{B(\xi)^2}{4}-A(\xi)^3=1$. Under the scaling transformation $\lambda=\xi^{\frac{2}{5}}z, Y(z)=\phi_{1}(\xi^{\frac{2}{5}}z)$,
eq.~\eqref{Schrodinger-equation-t-general} becomes
\begin{equation}\label{second-order-equation-w}
\frac{d^{2}Y}{dz^2}=\xi^2 \left[f(z)+\frac{g(z)}{\xi}+\frac{h(z)}{\xi^2}\right]Y,
\end{equation}
where
\[
f(z)=4(z-z_{0})(z-z_{1})(z-z_{2}),\quad
g(z)=-\frac{B(\xi)}{z-A(\xi)},\quad
h(z)=\frac{3}{4(z-A(\xi))^2}\]
with $z_{0}=-1$, $z_{1}=e^{\frac{\pi i}{3}}$ and $z_{2}=e^{-\frac{\pi i}{3}}$. It is clear that
$f(z)$, $g(z)$, and $h(z)$ are all analytic in a neighbourhood of $z=z_{1}$,
and that $f(z)$ has a simple zero at $z=z_{1}$.
Define a transformation $\zeta:=\zeta(z)$ by
\begin{equation}
\label{def-zeta}
\frac{2}{3}\zeta^{\frac{3}{2}}=\int_{z_{1}}^{z}f(t)^{\frac{1}{2}}dt,
\end{equation}
where the branches are chosen such that $\arg(z-z_{j})\in(-\pi,\pi),j=0,1,2$.
Then $\zeta(z)$ is a conformal mapping in the neighbourhood of $z=z_{1}$ and the two adjacent Stokes lines $\{\arg{\zeta}=\frac{\pi}{3},\pi\}$ emanating from $z=z_{1}$ to infinity. Moreover, a direct calculation of the integral in \eqref{def-zeta} yields
\begin{equation}\label{eq-asymp-relation-zeta-z}
\frac{2}{3}\zeta^{\frac{3}{2}}=\frac{4}{5}z^{\frac{5}{2}}+E_{0}+\mathcal{O}(z^{-\frac{1}{2}})
\end{equation}
as $z\to\infty$ uniformly for $\arg{z}\in\left[0,\frac{4\pi}{5}\right]$, where $E_{0}=\frac{3}{5}\B\left(\frac{1}{2},\frac{1}{3}\right)-\frac{\sqrt{3}i}{5}\B\left(\frac{1}{2},\frac{1}{3}\right)$, where $B(\cdot,\cdot)$ is the beta function.
Set $W(\zeta,\xi)=\left(\frac{f(z)}{\zeta}\right)^{\frac{1}{4}}Y$, then \eqref{second-order-equation-w} is transformed into
\begin{equation}\label{second-order-equation-W}
\frac{d^{2}W}{d\zeta^2}=\xi^2 \left[\zeta+\frac{\varphi(\zeta)}{\xi}+\frac{\psi(\zeta)}{\xi^2}\right]W,
\end{equation}
where
\begin{equation}\label{eq-def-varphi-psi}
\begin{split}
\varphi(\zeta)&=\frac{\zeta g(z)}{f(z)},\\
\psi(\zeta)&=\frac{5}{16\zeta^2}+\frac{\zeta[4f(z)f''(z)-5f'(z)^2]}{16f(z)^3}+\frac{\zeta h(z)}{f(z)}
\end{split}
\end{equation}
are both analytic at $\zeta=0$.
Moreover, inspired by the work of Dunster~\cite{Dunster-2014}, we have the following Lemma.
\begin{lemma}\label{lem-higher-order-approximation-ODE-1}
Let $\delta$ and $\epsilon$ be fixed small positive constants and denote
\[
\mathbb{S}:=\{\zeta\in\mathbb{C}~:~\arg{\zeta}=\frac{k\pi}{3}, k=1,3\}\]
and
\[
\mathbb{D}:=\left\{\zeta\in\mathbb{C}~:~|\zeta|\leq \delta \text{ or } \frac{\pi}{3}-\epsilon<\arg{\zeta}<\pi+\epsilon\right\}.
\]
For any $n\in\mathbb{N}$, define $\hat{\zeta}=\zeta+\mathcal{A}_{n}(\zeta,\xi)$ with
\begin{equation}\label{eq-def-An}
\mathcal{A}_{n}(\zeta,\xi)=\sum\limits_{s=1}^{2n}\frac{a_{s}(\zeta)}{\xi^{s}}
\end{equation}
satisfying
\begin{equation}\label{eq-An-coeff-compare}
\xi^2\left\{\mathcal{A}_{n}+(\zeta+\mathcal{A}_{n})(2+\mathcal{A}'_{n})\mathcal{A}'_{n}\right\}+\frac{3{\mathcal{A}''_{n}}^{2}-2(1+\mathcal{A}'_{n})\mathcal{A}'''_{n}}{4(1+\mathcal{A}'_{n})^2}=\xi\varphi(\zeta)+\psi(\zeta)+\mathcal{O}(\xi^{-2n+1})
\end{equation}
as $\xi\to+\infty$ uniformly for all $\zeta\in\mathbb{D}$.
Assume also that each $a_{s}(\zeta)$ is analytic in $\mathbb{D}$.
Then for all $s\in\mathbb{N}$, the limits
\begin{equation}\label{eq-def-alpha-s}
\alpha_{s}(A(\xi),B(\xi)):=\lim\limits_{\zeta\to\infty}\zeta^{\frac{1}{2}}a_{s}(\zeta)
\end{equation}
exist and depend only on $A(\xi)$ and $B(\xi)$.
Moreover, for any solution $W(\zeta, \xi)$ of \eqref{second-order-equation-W}, there exist two constants $C_{1}$ and $C_{2}$ such that
\begin{equation}\label{eq-higher-approximation-ODE-1}
W(\zeta,\xi)=[C_{1}+r_{1}(\zeta,\xi)]\Ai_{n}(\zeta,\xi)+[C_{2}+r_{2}(\zeta,\xi)]\Bi_{n}(\zeta,\xi),
\end{equation}
where
\begin{equation}
\Ai_{n}(\zeta,\xi)=\left(\frac{d\hat{\zeta}}{d\zeta}\right)^{-\frac{1}{2}}\Ai\left(\xi^{\frac{2}{3}}\hat{\zeta}\right)\quad \text{ and }\quad \Bi_{n}(\zeta,\xi)=\left(\frac{d\hat{\zeta}}{d\zeta}\right)^{-\frac{1}{2}}\Bi\left(\xi^{\frac{2}{3}}\hat{\zeta}\right),
\end{equation}
and $r_{1,2}(\zeta,\xi)=\mathcal{O}(\xi^{-2n})$
as $\xi\to+\infty$ uniformly for all $\zeta\in\mathbb{D}\cap\mathbb{S}$.
\end{lemma}

\begin{remark}\label{rem-b=0-1}
Combining \eqref{eq-def-An}, \eqref{eq-An-coeff-compare} and the analyticity of $a_{s}(\zeta)$, we see that $a_{s}(\zeta)$ are uniquely determined in a recursive manner. Specifically
\begin{equation}\label{eq-explicit-representation-a-s}
\begin{split}
a_{1}(\zeta)&=\frac{1}{2\zeta^{\frac{1}{2}}}\int_{0}^{\zeta}\frac{\varphi(t)}{t^{\frac{1}{2}}}dt,\\
a_{2}(\zeta)&=\frac{1}{2\zeta^{\frac{1}{2}}}\int_{0}^{\zeta}\frac{\psi(t)-2a_{1}(t)a_{1}'(t)-t(a_{1}'(t))^2}{t^{\frac{1}{2}}}dt,\\
a_{s}(\zeta)&=\frac{1}{2\zeta^{\frac{1}{2}}}\int_{0}^{\zeta}\frac{F_{s}(t)}{t^{\frac{1}{2}}}dt,\qquad  s>2,
\end{split}
\end{equation}
where $F_{s}(t)$ can be expressed in terms of $a_{1}(t), a_{2}(t),\cdots,a_{s-1}(t)$.
If $B(\xi)=0$, then $g(z)=0$ which implies $a_{1}(\zeta)=0$. By induction, we can show that $a_{2s+1}(\zeta)=0$ for all $s\in\mathbb{N}$. Moreover, in this case, the estimates of $r_{1,2}(\zeta,\xi)$ can be improved as
\begin{equation}\label{eq-higher-approximation-r12-improved}
r_{1,2}(\zeta,\xi)=\mathcal{O}(\xi^{-2n-1}),\qquad \xi\to+\infty.
\end{equation}
\end{remark}

\begin{remark}
The proof of Lemma \ref{lem-higher-order-approximation-ODE-1} is provided in Sec. \ref{sec:proof-lemma}, where we show that the limits of $r_{1}(\zeta,\xi)$ and $r_{2}(\zeta,\xi)$ as $\zeta\to\infty$ with $\arg{\zeta}\sim\frac{k\pi}{3}$, $k=1,3$ exist. The four limit values may not be equal to each other, but they all have an estimate of $\mathcal{O}(\xi^{-2n-1})$ as $\xi\to+\infty$. For convenience, hereafter, we set $r_{j,k}(\xi) = \lim\limits_{\zeta \to \infty} r_{i}(\zeta,\xi)$ with $\arg{\zeta} \sim \frac{k\pi}{3}$, for $k=1,3$, and $j=1,2$.
\end{remark}

By comparing Lemma \ref{lem-higher-order-approximation-ODE-1} with \cite[Theorem 2]{APC}, one may observe that Lemma \ref{lem-higher-order-approximation-ODE-1} is a refined version of \cite[Theorem 2]{APC}, as the error bounds in \eqref{eq-higher-approximation-ODE-1} are expressed in higher order terms. It is noteworthy that the special case of Lemma \ref{lem-higher-order-approximation-ODE-1} for real $\zeta$ has been considered by Dunster~\cite{Dunster-2014}, which inspires our extension of this result to the complex plane for $\zeta \in \mathbb{D} \cap \mathbb{S}$.

\paragraph{Large pole parameter case}
\

Assume $t= p$, where $p$ is any pole of the PI solution $y(t)$.  Recall that $\phi_{1}(\lambda)$ is the first component of $\hat{\Phi}(\lambda)$ in \eqref{eq-system-hat-Phi}.
It follows from eq.~\eqref{eq-system-hat-Phi} that $\phi_{1}(\lambda)$ satisfies the following reduced triconfluent Heun equation (RTHE; see \cite[eq.~(6)]{Xia-Xu-Zhao} or \cite[p.~108]{SL})
\begin{equation}\label{Schrodinger-equation-triconfluent-Heun}
\frac{d^{2}\phi_{1}}{d\lambda^{2}}=\left[4\lambda^{3}+2p\lambda-28H\right]\phi_{1}.
\end{equation}
Let $\xi$ be a large real parameter and write
\begin{equation}\label{eq-large-pole-parameter}
p=2C_{0}\xi^{\frac{4}{5}}
\quad\text{and}\quad
H=-\frac{1}{7}\xi^{\frac{6}{5}}\left(1+\frac{\tilde{h}(\xi)}{\xi^2}\right),
\end{equation}
where $C_{0}\approx 2.004860503264124$ is defined in \cite[Lemma 2.1]{LongLiWang} and $\tilde{h}(\xi)$ is a real bounded function satisfying
\begin{equation}\label{eq-tilde-h-behave-like}
\tilde{h}(\xi)=\sum\limits_{s=0}^{n-1}\frac{\tilde{h}_{s}}{\xi^{2s}}+\mathcal{O}(\xi^{-2n})
\end{equation}
as $\xi\to+\infty$ with the coefficients $\tilde{h}_{s}$ to be determined later (see Remark \ref{remark-beta-imaginary}).

\begin{remark}\label{remark-tilde-h}
Note that we assume all the odd terms in \eqref{eq-tilde-h-behave-like} vanish. This condition ensures that the Stokes multipliers obtained in Theorem~\ref{Thm-s-large-pole} below correspond to the real tritronqu\'{e}e solution of the PI equation. A detailed proof of this result is provided in Appendix~\ref{tilde-h-odd-term-vanish}.
\end{remark}

Under the scaling transformation $\lambda=\xi^{\frac{2}{5}}z$ and $\phi_{1}(\xi^{\frac{2}{5}}z)=Y(z)$, eq.~\eqref{Schrodinger-equation-triconfluent-Heun} becomes
\begin{equation}\label{eq-Schrodinger-case-II}
\frac{d^{2}Y}{dz^{2}}=\xi^{2}\left[\tilde{f}(z)+\frac{4\tilde{h}(\xi)}{\xi^2}\right]Y,
\end{equation}
where
\begin{equation}\label{eq-def-tilde-f}
\tilde{f}(z)=4(z^3+C_{0}z+1)=4(z-\tilde{z}_{0})(z-\tilde{z}_{1})(z-\tilde{z}_{2})
\end{equation}
with $\tilde{z}_{0}<0$, and $\tilde{z}_{1}, \tilde{z}_{2}$ being a conjugate
pair. We shall assume $\Im\tilde{z}_1\geq\Im\tilde{z}_2$.
Define the transformation $\eta:=\eta(z)$ by
\begin{equation}
\label{def-eta}
\frac{2}{3}\eta^{\frac{3}{2}}=\int_{\tilde{z}_{1}}^{z}\tilde{f}(t)^{\frac{1}{2}}dt,
\end{equation}
where the branches are chosen such that $\arg(z-\tilde{z}_{j})\in(-\pi,\pi), j=0,1,2$.
Then $\eta(z)$ is a conformal mapping in a neighbourhood of $z=\tilde{z}_{1}$ and the two adjacent Stokes lines emanating from $z=\tilde{z}_{1}$ to infinity. Moreover, a direct calculation of the integral in \eqref{def-eta} yields
\begin{equation}\label{eq-asymp-relation-eta-z}
\frac{2}{3}\eta^{\frac{3}{2}}=\frac{4}{5}z^{\frac{5}{2}}+2C_{0}z^{\frac{1}{2}}+\tilde{E}_{0}+\mathcal{O}(z^{-\frac{1}{2}})
\end{equation}
as $z\to\infty$ uniformly for $\arg{z}\in\left[0,\frac{4\pi}{5}\right]$, where
\begin{equation}\label{eq-def-tilde-E0}
\tilde{E_{0}}=\int_{\tilde{z}_{1}}^{\infty e^{i\theta}}
\left[2(t^3+C_{0}t+1)^{\frac{1}{2}}-\left(2t^{\frac{3}{2}}+C_{0}t^{-\frac{1}{2}}\right)\right]dt
-\left(\frac{4}{5}\tilde{z}_{1}^{\frac{5}{2}}+2C_{0}\tilde{z}_{1}^{\frac{1}{2}}\right)
\end{equation}
with ${\theta}\in\left[0,\frac{4\pi}{5}\right]$.
Set $\tilde{W}(\eta,\xi)=\left(\frac{\tilde{f}(z)}{\eta}\right)^{\frac{1}{4}}Y$, then \eqref{second-order-equation-w} becomes
\begin{equation}\label{second-order-equation-tildeW}
\frac{d^{2}\tilde{W}}{d\eta^2}=\xi^2 \left[\eta+\frac{\tilde{\psi}(\eta)+\tilde{\varphi}(\eta,\xi)}{\xi^2}\right]\tilde{W},
\end{equation}
where
\begin{equation}\label{eq-def-tilde-psi}
\tilde{\psi}(\eta)=\frac{5}{16\eta^2}+\frac{\eta[4\tilde{f}(z)\tilde{f}''(z)-5\tilde{f}'(z)^2]}{16\tilde{f}(z)^3}
\end{equation}
and
\begin{equation}
\tilde{\varphi}(\eta,\xi)=\frac{4\eta \tilde{h}(\xi)}{\tilde{f}(z)}
\end{equation}
are analytic at $\eta=0$. Similar to Lemma~\ref{lem-higher-order-approximation-ODE-1}, we have the following result.
\begin{lemma}
\label{lem-higher-order-approximation-ODE-2}
Let $\delta$ and $\epsilon$ be fixed small positive constants and denote
\[
\tilde{\mathbb{S}}:=\{\eta\in\mathbb{C}~:~\arg{\eta}=\frac{k\pi}{3}, k=1,3\}
\]
and
\[
\tilde{\mathbb{D}}:=\left\{\eta\in\mathbb{C}~:~|\eta|\leq \delta \text{ or } \frac{\pi}{3}-\epsilon<\arg{\eta}<\pi+\epsilon\right\}.
\]
For any $n\in\mathbb{N}$, define $\hat{\eta}=\eta+\mathcal{\tilde{A}}_{n}(\eta,\xi)$ with
\begin{equation}\label{eq-tilde-An-def}
\mathcal{\tilde{A}}_{n}(\eta,\xi)=\sum\limits_{s=1}^{n}\frac{\tilde{a}_{s}(\eta)}{\xi^{2s}}
\end{equation}
satisfying
\begin{equation}\label{eq-tilde-An-coeff-compare}
\xi^2\left\{\mathcal{\tilde{A}}_{n}+(\eta+\mathcal{\tilde{A}}_{n})(2+\mathcal{\tilde{A}}'_{n})\mathcal{A}'_{n}\right\}+\frac{3(\tilde{\mathcal{A}}''_{n})^{2}-2(1+\tilde{\mathcal{A}}'_{n})\tilde{\mathcal{A}}'''_{n}}{4(1+\tilde{\mathcal{A}}'_{n})^2}=\tilde{\psi}(\eta)+\tilde{\varphi}(\eta,\xi)+\mathcal{O}(\xi^{-2n})
\end{equation}
as $\xi\to+\infty$ uniformly for all $\eta\in\mathbb{\tilde{D}}$. Assume also that each $\tilde{a}_{s}(\eta)$ is analytic at $\eta=0$. Then for all $s\in\mathbb{N}$, the limits
\begin{equation}\label{eq-def-beta-s}
\beta_{s}=\lim\limits_{\eta\to\infty}\eta^{\frac{1}{2}}\tilde{a}_{s}(\eta)
\end{equation}
exist.
Moreover, for any solution of \eqref{second-order-equation-tildeW}, there exist two constants $\tilde{C}_{1}$ and $\tilde{C}_{2}$ such that
\begin{equation}\label{eq-higher-approximation-ODE-2}
\tilde{W}(\eta,\xi)=[\tilde{C}_{1}+\tilde{r}_{1}(\eta,\xi)]\Ai_{n}(\eta,\xi)+[\tilde{C}_{2}+\tilde{r}_{2}(\eta,\xi)]\Bi_{n}(\eta,\xi),
\end{equation}
where
\begin{equation}
\Ai_{n}(\eta,\xi)=\left(\frac{d\hat{\eta}}{d\eta}\right)^{-\frac{1}{2}}\Ai\left(\xi^{\frac{2}{3}}\hat{\eta}\right)\quad \text{ and }\quad \Bi_{n}(\eta,\xi)=\left(\frac{d\hat{\eta}}{d\eta}\right)^{-\frac{1}{2}}\Bi\left(\xi^{\frac{2}{3}}\hat{\eta}\right),
\end{equation}
and $\tilde{r}_{1,2}(\eta,\xi)=\mathcal{O}(\xi^{-2n-1})$ as $\xi\to+\infty$ uniformly for all $\eta\in\mathbb{\tilde{D}}\cap\mathbb{\tilde{S}}$.
\end{lemma}

\begin{remark}
Similar to $a_{s}(\zeta)$ in Lemma \ref{lem-higher-order-approximation-ODE-1},
the coefficients $\tilde{a}_{s}(\eta)$ can be determined in a recursive manner
by \eqref{eq-tilde-An-coeff-compare} and the fact that $\tilde{a}_{s}(\eta)$ are analytic in $\tilde{\mathbb{D}}$. Specifically,
\begin{equation}\label{eq-tilde-a1-expression}
\begin{split}
\tilde{a}_{1}(\eta)&=\frac{1}{2\eta^{\frac{1}{2}}}\int_{0}^{\eta}\frac{\tilde{\psi}(t)}{t^{\frac{1}{2}}}dt+\frac{1}{2\eta^{\frac{1}{2}}}\int_{z_{1}}^{z}\frac{4\tilde{h}_{0}}{\tilde{f}(z)^{\frac{1}{2}}}dz
\end{split}
\end{equation}
and
\begin{equation}\label{eq-def-tilde-a-s-expression}
\tilde{a}_{s}(\eta)=\frac{1}{2\eta^{\frac{1}{2}}}\int_{0}^{\eta}\frac{\tilde{F}_{s}(t)}{t^{\frac{1}{2}}}dt+\frac{1}{2\eta^{\frac{1}{2}}}\int_{z_{1}}^{z}\frac{4\tilde{h}_{s-1}}{\tilde{f}(z)^{\frac{1}{2}}}dz
\end{equation}
for all $s\geq 2$, where $\tilde{F}_{s}(t)$ are polynomials of $\tilde{a}_{1}(t), \tilde{a}_{2}(t),\dots,\tilde{a}_{s-1}(t)$ and their first three order derivatives. For example, $\tilde{F}_{2}(t)=\tilde{a}_{1}'''(t)-4\tilde{a}_{1}(t)\tilde{a}_{1}'(t)-2t{\tilde{a}_{1}'}(t)^2$.
\end{remark}

\begin{remark}
We may regard eq.~\eqref{eq-Schrodinger-case-II} as a special case of eq.~\eqref{second-order-equation-w},
and so that Lemma~\ref{lem-higher-order-approximation-ODE-2} as a special case of Lemma~\ref{lem-higher-order-approximation-ODE-1}.
Similar to $r_{1,2}(\zeta)$ in Lemma~\ref{lem-higher-order-approximation-ODE-1}, the limits of $\tilde{r}_{1}(\eta,\xi)$ and $\tilde{r}_{2}(\eta,\xi)$ as $\eta\to\infty$ with $\arg{\eta}\sim\frac{k\pi}{3}$, $k=1,3$, exist.
We shall denote $\tilde{r}_{j,k}(\xi)=\lim\limits_{\eta\to\infty}\tilde{r}_{i}(\eta,\xi)$ with $\arg{\eta}\sim\frac{k\pi}{3}$, $k=1,3$, and $j=1,2$.
\end{remark}

\subsection{Full asymptotic expansion of the Stokes multipliers}

Using Lemma \ref{lem-higher-order-approximation-ODE-1} and the Stokes phenomenon of the Airy functions, we obtain the full asymptotic expansion of the Stokes multipliers corresponding to the PI solution with $y(0)=a$ and $y'(0))=b$ as $\xi=\left(\frac{b^2}{4}-a^3\right)^{\frac{5}{6}}\to+\infty$.

\begin{theorem}\label{Thm-s-large-value}
Assume that $a=A(\xi)\xi^{\frac{2}{5}}$ and $b=B(\xi)\xi^{\frac{3}{5}}$ with $\frac{B(\xi)^2}{4}-A(\xi)^3=1$.
Then the Stokes multipliers corresponding to $y(t; a, b)$ have the following asymptotic expansions:
\begin{equation}\label{eq-stokes-full-asymptotic-thm1}
\begin{aligned}
s_{0}&\sim 2i\exp\left\{-\sum\limits_{s=0}^{\infty}\frac{2\Re\alpha_{s}(A(\xi),B(\xi))}{\xi^{s-1}}\right\}\left\{\cos{\left[\sum\limits_{s=0}^{\infty}-\frac{2\Im\alpha_{s}(A(\xi),B(\xi))}{\xi^{s-1}}\right]}\right\},\\
s_{1}&=-\overline{s_{-1}}\sim i\exp\left\{\sum\limits_{s=0}^{\infty}\frac{2\alpha_{s}(A(\xi),B(\xi))}{\xi^{s-1}}\right\},\\
s_{2}&=-\overline{s_{3}}\sim -i\exp\left\{\sum\limits_{s=0}^{\infty}\frac{-4i\Im\alpha_{s}(A(\xi),B(\xi))}{\xi^{s-1}}\right\}
\end{aligned}
\end{equation}
as $\xi\rightarrow+\infty$, where
\begin{equation}
\alpha_{0}(A(\xi),B(\xi))=\frac{3}{5}\B\left(\frac{1}{2},\frac{1}{3}\right)-\frac{\sqrt{3}i}{5}\B\left(\frac{1}{2},\frac{1}{3}\right)
\end{equation}
and $\alpha_{s}(A(\xi),B(\xi))$, $s=1,2,\dots$, are given in Lemma \ref{lem-higher-order-approximation-ODE-1}.
Here, $B(\cdot,\cdot)$ stands for the beta function.
\end{theorem}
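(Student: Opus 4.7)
The plan is to apply Lemma~\ref{lem-higher-order-approximation-ODE-1} to the second-order equation \eqref{Schrodinger-equation-t-general} at $t=0$, and then to identify the Stokes multipliers by matching the resulting higher-order uniform approximation against the canonical behavior of the Lax-pair solutions $\Psi_k(\lambda)$ (recalled in Appendix~\ref{sec:AppA}) at infinity.

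First, Lemma~\ref{lem-higher-order-approximation-ODE-1} gives the uniform representation
\[
\phi_1(\lambda) = \left(\frac{f(z)}{\zeta}\right)^{-1/4}\bigl\{[C_1 + r_1(\zeta,\xi)]\Ai_n(\zeta,\xi) + [C_2 + r_2(\zeta,\xi)]\Bi_n(\zeta,\xi)\bigr\}
\]
on $\mathbb{D}\cap\mathbb{S}$, and I would extract the large-$\zeta$ behavior of $\Ai_n, \Bi_n$ along the two Stokes rays $\arg\zeta = \pi/3$ and $\arg\zeta = \pi$. Since $a_s(\zeta)\sim \alpha_s/\zeta^{1/2}$ as $\zeta\to\infty$ by Lemma~\ref{lem-higher-order-approximation-ODE-1}, while $\frac{2}{3}\zeta^{3/2} = \frac{4}{5}z^{5/2}+E_0+\mathcal{O}(z^{-1/2})$ by \eqref{eq-asymp-relation-zeta-z}, a Taylor expansion of $\hat{\zeta}^{3/2}=(\zeta+\mathcal{A}_n(\zeta,\xi))^{3/2}$ yields
\[
\frac{2}{3}\xi\hat\zeta^{3/2} = \frac{4}{5}\xi z^{5/2} + \sum_{s=0}^{2n-1}\frac{\alpha_s}{\xi^{s-1}} + \mathcal{O}(\xi^{-2n+1}),
\]
with the convention $\alpha_0 := E_0$. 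Substituting the standard Airy asymptotics then expresses $\Ai_n, \Bi_n$ as products of $\exp(\pm\frac{4}{5}\xi z^{5/2})$ and the exponential factors $\exp(\pm\sum_{s\ge 0}\alpha_s/\xi^{s-1})$ that appear on the right-hand side of \eqref{eq-stokes-full-asymptotic-thm1}.

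Next, I would compare these asymptotics along each Stokes direction with the canonical form of $\Psi_k(\lambda)$ at infinity, using $\lambda = \xi^{2/5}z$ so that the phase $\frac{4}{5}\lambda^{5/2}$ becomes $\frac{4}{5}\xi z^{5/2}$. On each Stokes ray one of $\Ai_n, \Bi_n$ is dominant and the other subdominant, so matching the dominant term fixes $C_1$ and $C_2$ (up to exponentially subdominant contributions) in two adjacent Stokes sectors. Writing $\Psi_{k+1} = \Psi_k S_k$ and invoking the classical Airy connection identity $\Ai(e^{\mp 2\pi i/3}z) = \frac{1}{2}e^{\pm\pi i/3}(\Ai(z)\mp i\Bi(z))$ together with the Wronskian $W[\Ai,\Bi]=1/\pi$, one reads off the corresponding $s_k$ as a ratio of exponentials in $\sum_s\alpha_s/\xi^{s-1}$. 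Repeating the analysis at the other turning points $z_0, z_{-1}$, or invoking the cyclic symmetry of the Lax pair, yields the remaining Stokes multipliers, and the reality of $y(t;a,b)$ for real $(a,b)$ produces the conjugation relations $s_1 = -\overline{s_{-1}}$, $s_2 = -\overline{s_3}$ as well as the cosine form of $s_0$.

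The main obstacle is a careful propagation of errors. Because the remainders $r_{1,2}(\zeta,\xi)=\mathcal{O}(\xi^{-2n})$ multiply Airy functions that grow exponentially along some of the Stokes rays, I would need to estimate ratios such as $r_j\Bi_n/\Ai_n$ along the direction where $\Ai_n$ is subdominant and verify that, after cancellation of the dominant exponentials, the residual error remains $\mathcal{O}(\xi^{-2n})$. In parallel, the truncation of $\exp\bigl(\sum_{s=0}^{N}\alpha_s/\xi^{s-1}\bigr)$ has to be made consistent with this bound, and the $\mathcal{O}(z^{-1/2})$ tail in \eqref{eq-asymp-relation-zeta-z} must be shown to contribute at most $\mathcal{O}(\xi^{-2n})$ along the relevant Stokes rays. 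Once these estimates are in place the remaining algebraic manipulation is essentially routine, and the sharper estimate \eqref{eq-higher-approximation-r12-improved} for $B(\xi)=0$ from Remark~\ref{rem-b=0-1} propagates to a correspondingly sharper error in \eqref{eq-stokes-full-asymptotic-thm1}.
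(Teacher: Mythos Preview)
Your overall strategy matches the paper's: apply Lemma~\ref{lem-higher-order-approximation-ODE-1}, expand $\frac{2}{3}\xi\hat\zeta^{3/2}$ using $a_s(\zeta)\sim\alpha_s\zeta^{-1/2}$ and \eqref{eq-asymp-relation-zeta-z}, insert the Airy asymptotics, and match against the canonical solutions. However, several points in your execution are off.

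First, your claim that ``on each Stokes ray one of $\Ai_n,\Bi_n$ is dominant and the other subdominant'' is incorrect here. The rays $\arg\zeta=\pi/3$ and $\arg\zeta=\pi$ are precisely the anti-Stokes lines of the Airy equation: $\zeta^{3/2}$ is purely imaginary there, so both $e^{\pm\frac{2}{3}\xi\zeta^{3/2}}$ are oscillatory and of the same order. The paper does not match a dominant term; instead it writes $W$ along $\arg\zeta\sim\pi/3$ (i.e.\ $\lambda\in\Omega_1$) as a combination of $(\hat\Phi_1)_{21}$ and $(\hat\Phi_1)_{22}$, and along $\arg\zeta\sim\pi$ (i.e.\ $\lambda\in\Omega_2$) as a combination of $(\hat\Phi_2)_{21}$ and $(\hat\Phi_2)_{22}$, using the \emph{same} constants $C_1,C_2$ but different limit values $r_{j,k}(\xi)$. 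Equating the two representations via $\Phi_2=\Phi_1 S_1$ yields $s_1$ directly as $-ic_2/c_1$ plus an $\mathcal{O}(\xi^{-2n})$ correction. For the same reason, your concern about ``$r_j\Bi_n/\Ai_n$ along the direction where $\Ai_n$ is subdominant'' does not arise: on these rays nothing blows up, and the error control is simply that $r_{j,k}(\xi)=\mathcal{O}(\xi^{-2n})$ while $c_2/c_1$ is bounded.

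Second, there is no need to repeat the analysis at the turning points $z_0,z_{-1}$, nor to invoke cyclic symmetry. The paper extracts only $s_1$ from the matching; $s_{-1}$ then follows from the conjugation relation \eqref{eq-sk-s-k-relation}, and $s_0,s_2,s_3$ from the algebraic constraints $s_k=i(1+s_{k+2}s_{k+3})$ in \eqref{eq-constraints-stokes-multipliers}. This is both simpler and avoids the need to set up separate Liouville transformations at the other turning points.
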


\begin{remark}\label{rem-after-Thm-1}
When $A(\xi)$ and $B(\xi)$ are both fixed, $\alpha_{s}(A(\xi),B(\xi))$ are all fixed and independent of $\xi$. In particular, when $B(\xi)=0$, we have $A(\xi)=-1$, then according to Remark \ref{rem-b=0-1}, we have
\begin{equation}
\alpha_{2s+1}(-1,0)=0 \quad \text{for all} \quad s=0,1,2,\cdots.
\end{equation}
When $A(\xi)=0$, we have $B(\xi)=\pm 2$. Hence, from \eqref{eq-An-coeff-compare}, we obtain
\begin{equation}
\begin{split}
\alpha_{1}(0,\pm 2)&=\lim\limits_{\zeta\to\infty}\frac{1}{2}\int_{0}^{\zeta}\frac{\pm \varphi(t)}{t^{\frac{1}{2}}}dt=\frac{1}{2}\int_{e^{\frac{\pi i}{3}}}^{\infty}\frac{g(z)}{f(z)^{\frac{1}{2}}}dz=\mp\int_{e^{\frac{\pi i}{3}}}^{\infty}\frac{1}{2z(z^3+1)^{\frac{1}{2}}}dz=\pm                                                                                                                                                                                                                                                                                                                                                                                                                                                                                                                                                                                                                                                                                                                                                                                                                                                                                                                                                                                                                                                                                                                                                                                                                                                                                                                            \frac{\pi i}{6}.
\end{split}
\end{equation}
However, when $A(\xi)$ or $B(\xi)$ is not fixed, we find that $\alpha_{s}(A(\xi),B(\xi))$ all depend on $\xi$. For instance, when $a\neq 0$ is fixed, $A(\xi)=a\xi^{-\frac{2}{5}}$ and $B(\xi)=\pm \left(2+a^3\xi^{-\frac{6}{5}}\right)+\mathcal{O}(\xi^{-\frac{12}{5}})$ as $\xi\to+\infty$. This implies that $\alpha_{1}(A(\xi),B(\xi))=\pm\frac{\pi i}{6}+\mathcal{O}(\xi^{-\frac{2}{5}})$ in this case. Similarly, when $b\neq 0$ is fixed, we have $B(\xi)=b\xi^{-\frac{3}{5}}$ and $A(\xi)=-1+\frac{b^2}{12}\xi^{-\frac{6}{5}}+\mathcal{O}(\xi^{-\frac{12}{5}})$ as $\xi\to+\infty$. This implies that $\alpha_{1}(A(\xi),B(\xi))=\mathcal{O}(\xi^{-\frac{3}{5}})$ in this case.
\end{remark}

Next, let us consider the PI solution with large pole parameters in \eqref{eq-large-pole-parameter}.
Using Lemma \ref{lem-higher-order-approximation-ODE-2} and the Stokes phenomenon of the Airy functions, we obtain the full asymptotic expansion of the Stokes multipliers corresponding to such PI solution
as $\xi\to+\infty$.

\begin{theorem}\label{Thm-s-large-pole}
Let $C_{0}$ be the constant given in \cite[Lemma 1.2]{LongLiWang} and let $(p,H)$ be parametrized as in \eqref{eq-large-pole-parameter}. The full asymptotic expansion of the Stokes multipliers, corresponding to the real tritronqu\'{e}e solution of PI equation, are given by
\begin{equation}
\begin{split}
\label{eq-stokes-full-asymptotic-thm2}
s_{0}&\sim i\left\{-2\cos{\left[\sum\limits_{s=0}^{\infty}-\frac{2i\beta_{s}}{\xi^{2s-1}}\right]}-1\right\},\\
s_{1}&\sim-\overline{s_{-1}}\sim -i\exp\left\{\sum\limits_{s=0}^{\infty}\frac{2\beta_{s}}{\xi^{2s-1}}\right\},\\
s_{2}&\sim-\overline{s_{-2}}\sim -2i\exp\left\{-\sum\limits_{s=0}^{\infty}\frac{3\beta_{s}}{\xi^{2s-1}}\right\}\cos\left\{\sum\limits_{s=0}^{\infty}\frac{i\beta_{s}}{\xi^{2s-1}}\right\}
\end{split}
\end{equation}
as $\xi\to+\infty$, where $\beta_{s}, s=1,2,\dots$, are given in Lemma~\ref{lem-higher-order-approximation-ODE-2}. In particular, we have $\beta_{0}=\tilde{E}_{0}$.

\end{theorem}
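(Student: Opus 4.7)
The plan is to mirror the derivation of Theorem~\ref{Thm-s-large-value}, with Lemma~\ref{lem-higher-order-approximation-ODE-2} replacing Lemma~\ref{lem-higher-order-approximation-ODE-1} throughout. Two features peculiar to the pole case simplify the bookkeeping: first, $\tilde{\mathcal{A}}_n$ contains only even powers of $\xi^{-1}$, so every odd-index term drops out automatically; second, by Remark~\ref{remark-beta-imaginary}, each $\beta_s$ is purely imaginary, so the exponentials built from $\sum \beta_s \xi^{1-2s}$ will contribute pure phases — this is what ultimately converts a generic $\cosh$ into the $\cos$ appearing in $s_0$ and produces the real-oscillatory form displayed in \eqref{eq-stokes-full-asymptotic-thm2}.

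The first step is to take the representation \eqref{eq-higher-approximation-ODE-2} on the two Stokes rays $\arg\eta = \pi/3$ and $\arg\eta = \pi$ and translate it into asymptotics of $Y(z) = (\eta/\tilde{f}(z))^{1/4}\tilde{W}(\eta,\xi)$ as $z\to\infty$, using the standard large-argument expansions of $\Ai$ and $\Bi$. The key computation is the expansion
\[
\tfrac{2}{3}\xi\hat\eta^{3/2} = \xi\cdot\tfrac{2}{3}\eta^{3/2} + \sum_{s=1}^{n}\xi^{1-2s}\,\eta^{1/2}\tilde{a}_s(\eta) + \mathcal{O}(\xi^{-2n-1}),
\]
which, after letting $\eta\to\infty$ and invoking both \eqref{eq-asymp-relation-eta-z} and $\beta_s = \lim_{\eta\to\infty}\eta^{1/2}\tilde{a}_s(\eta)$, becomes $\xi\bigl(\tfrac{4}{5}z^{5/2} + 2C_0 z^{1/2}\bigr) + \sum_{s\ge 0}\beta_s\xi^{1-2s}$ with $\beta_0 = \tilde{E}_0$. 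Under the inverse scaling $\lambda = \xi^{2/5}z$, the polynomial-in-$z$ part reassembles into $\tfrac{4}{5}\lambda^{5/2} + p\lambda^{1/2}$, exactly the canonical exponent prescribed in Appendix~\ref{sec:AppA} for the asymptotics of the canonical solutions $\Phi_k$ at $\lambda=\infty$.

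The second step is matching. On each Stokes ray, comparing the asymptotics of $Y$ just derived with the prescribed asymptotics of a pair of adjacent canonical solutions $\Phi_k$ determines $\tilde{C}_1, \tilde{C}_2$ up to $\mathcal{O}(\xi^{-2n-1})$ errors inherited from $\tilde{r}_{1,2}$. Repeating the matching on the opposite ray and invoking the Airy connection identity
\[
\Ai(e^{-2\pi i/3}w) = \tfrac{1}{2}e^{i\pi/3}\bigl(\Ai(w)-i\Bi(w)\bigr)
\]
(and its conjugate) yields linear relations among adjacent $\Phi_k$'s whose coefficients are, by definition, the Stokes multipliers $s_k$. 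Finally, the cyclic relation $s_{k+5}=s_k$, together with the reality symmetry of the Lax pair (giving $s_{-k}=-\overline{s_k}$) and the imaginary nature of $\beta_s$, collapses the five multipliers into the three formulas displayed in \eqref{eq-stokes-full-asymptotic-thm2}.

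The main obstacle will be the precise tracking of branch choices for $\hat\eta^{3/2}$ on the two Stokes rays and the accumulation of phases $e^{\pm i\pi/3}$ produced each time the Airy connection identities are invoked; these must combine coherently with the sum $\sum_{s\ge 0}\beta_s\xi^{1-2s}$ carried in the exponent in order for the closed-form expressions in \eqref{eq-stokes-full-asymptotic-thm2} to emerge. A further, smaller subtlety is that the extra $2C_0 z^{1/2}$ contribution in \eqref{eq-asymp-relation-eta-z}, absent from the initial-data case, must be routed cleanly into the canonical $p\lambda^{1/2}$ exponent rather than leaking into the $\beta_s$-sums that ultimately form the Stokes multipliers.
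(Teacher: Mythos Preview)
Your proposal is correct and follows essentially the same route as the paper: derive $s_1$ (and $s_{-1}=-\overline{s_1}$) by mirroring the matching argument of Theorem~\ref{Thm-s-large-value} with Lemma~\ref{lem-higher-order-approximation-ODE-2} in place of Lemma~\ref{lem-higher-order-approximation-ODE-1}, then exploit the pure imaginarity of the $\beta_s$ to reduce the remaining multipliers to the closed forms in \eqref{eq-stokes-full-asymptotic-thm2}. One small point: the relation you cite as the ``cyclic relation $s_{k+5}=s_k$'' is not by itself enough to pass from $s_1,s_{-1}$ to $s_0$ and $s_2$; what actually does the work (and what the paper invokes) is the nonlinear constraint $s_k=i(1+s_{k+2}s_{k+3})$ from \eqref{eq-constraints-stokes-multipliers}, together with the conjugate symmetry --- make sure you use this explicitly when collapsing the five multipliers.
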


\begin{remark}\label{remark-beta-imaginary}
Note that the real tritronqu\'{e}e solution has the Stokes multipliers $s_{1}=s_{-1}=i,$ $s_{2}=s_{-2}=0,$ $s_{0}=i$. Consequently, by Theorem~\ref{Thm-s-large-pole}, we must have \begin{equation}\label{eq-Re-betas=0}
\Re{\beta_{s}}=0,\qquad\forall s\in\mathbb{N}.	
\end{equation}
Indeed, according to \cite[eqs. (2.2), (2.6) and (2.7)]{LongLiWang}, it follows that $\Re \beta_{0}=\Re\tilde{E}_{0}=0$ holds. Using \eqref{eq-Re-betas=0}, we can recursively determine $\tilde{h}_{s}$ for all $s\in\mathbb{N}$.
In particular, from the definition of $\beta_{1}$ in \eqref{eq-def-beta-s} and \eqref{eq-tilde-a1-expression}, the coefficient $\tilde{h}_{0}$ is determined by
\begin{equation}
\Re \beta_{1}=\Re{d_{1}}+4\tilde{h}_{0}\Re{d_{0}}=0,
\end{equation}
where
\begin{equation}\label{eq-explicit-integral-d0-d1}
d_{1}=\frac{1}{2}\int_{z_{1}}^{\infty e^{i\theta}}\frac{\tilde{\psi}(\eta(z))\tilde{f}(z)^{\frac{1}{2}}}{\eta(z)}dz \quad\text{and}\quad d_{0}=\frac{1}{2}\int_{z_{1}}^{\infty e^{i\theta}}\frac{1}{\tilde{f}(z)^{\frac{1}{2}}}dz,
\end{equation}
with $\theta\in[0,\frac{4\pi}{5}]$.
This further implies
\begin{equation}\label{eq-explicit-tilde-h0-beta1}
\tilde{h}_{0}=-\frac{\Re d_{1}}{4\Re d_{0}}\quad { and }\quad \beta_{1}=d_{1}-\frac{\Re d_{1}}{\Re d_{0}}d_{0}.
\end{equation}
\end{remark}

\subsection{Full asymptotic expansions of the nonlinear eigenvalues and pole parameters}

An asymptotic expansion of the nonlinear eigenvalues of PI is proposed in \cite{Bender-Komijani-Wang}
based on numerical simulations.
Now, we can show that such a full asymptotic expansion is indeed valid
as a direct consequence of Theorem~\ref{Thm-s-large-value}.

\begin{corollary}\label{cor-1}
For $a=0$, {\it i.e.} $A(\xi)=0, B(\xi)=\pm 2$, there exist two sequences $\{b_{n}^{+}\}$ and $\{b_{n}^{-}\}$ such that the PI solutions $y(t; a, b_{n}^{\pm})$ are separatrix solutions, and the following asymptotic expansions hold:
\begin{equation}\label{eq-full-expansion-bn+}
b_{n}^{+}\sim 2\left[\frac{\sqrt{3\pi}\Gamma\left(\frac{11}{6}\right)}{\Gamma\left(\frac{1}{3}\right)}\right]^{\frac{3}{5}}\left(n-\frac{1}{6}\right)^{\frac{3}{5}}\left[1+\sum\limits_{s=2}^{\infty}\frac{B_{s}^{+}}{(n-\frac{1}{6})^s}\right]
\end{equation}
and
\begin{equation}\label{eq-full-expansion-bn-}
b_{n}^{-}\sim -2\left[\frac{\sqrt{3\pi}\Gamma\left(\frac{11}{6}\right)}{\Gamma\left(\frac{1}{3}\right)}\right]^{\frac{3}{5}}\left(n-\frac{5}{6}\right)^{\frac{3}{5}}\left[1+\sum\limits_{s=2}^{\infty}\frac{B_{s}^{-}}{(n-\frac{5}{6})^s}\right]
\end{equation}
as $n\to\infty$, where $B_{s}^{\pm}, s=2,3,\dots$, can be expressed in terms of $\alpha_{k}(0,\pm 2), k=0,1,2,\cdots,s$, where $\alpha_{s}(A(\xi),B(\xi))$ is defined in \eqref{eq-def-alpha-s}.  In particular, we have
\begin{equation}\label{eq-B2pm-explicit-by-alpha2}
B_{2}^{\pm}=\frac{12\sqrt{3}}{25\pi^2}\B\left(\frac{1}{2},\frac{1}{3}\right)\Im \alpha_{2}(0,\pm 2),
\end{equation}
and
\begin{equation}\label{eq-alpha2-integral}
\alpha_{2}(0,\pm 2)=\int_{0}^{\infty e^{i\theta}}\frac{\psi(t)-2a_{1}(t)a_{1}'(t)-ta_{1}'(t)^2}{t^{\frac{1}{2}}}dt, \qquad \theta\in[0,\frac{4\pi}{3}],
\end{equation}
where $\psi(\zeta)$ and $a_{1}(\zeta)$ are defined in \eqref{eq-def-varphi-psi} and \eqref{eq-explicit-representation-a-s} respectively.

\end{corollary}

\begin{proof}
When $a=0$, we have $B(\xi)=\pm 2$. Regard the Stokes multipliers $s_{k}$ as functions of $\xi$. According to \eqref{eq-stokes-full-asymptotic-thm1}, it is evident that there exist two sequences, $\{\xi_{n}^{+}\}$ and $\{\xi_{n}^{-}\}$, such that $s_{0}(\xi_{n}^{\pm})=0$. This implies that the solutions satisfying $y(0)=0$ and $y'(0)=\pm 2\xi_{n}^{\frac{3}{5}}$ are separatrix solutions. Furthermore, we have
\begin{equation}
-2\Im \alpha_{0}(0,\pm 2)\xi_{n}^{\pm}-2\Im \alpha_{1}(0,\pm 2)-\sum\limits_{s=1}^{\infty}\frac{2\Im \alpha_{s+1}(0,\pm 2)}{(\xi_{n}^{\pm})^{s}}\sim\left(n-\frac{1}{2}\right)\pi
\end{equation}
as $n\to\infty$. This, in turn, implies that
\begin{equation}\label{eq-asym-equation-xi-n}
\xi_{n}^{\pm}\sim-\frac{\left(n-\frac{1}{2}\pm\frac{1}{3}\right)\pi}{2\Im \alpha_{0}(0,\pm 2)}-\sum\limits_{s=1}^{\infty}\frac{2\Im \alpha_{s+1}(0,\pm 2)}{2\Im \alpha_{0}(0,\pm 2)(\xi_{n}^{\pm})^{s}}
\end{equation}
as $n\to\infty$. Noting that $b=B(\xi)\xi^{\frac{3}{5}}=\pm 2\xi^{\frac{3}{5}}$, it follows that $b_{n}^{\pm}=\pm 2(\xi^{\pm}_{n})^{\frac{3}{5}}$.  We can then obtain \eqref{eq-full-expansion-bn+} and \eqref{eq-full-expansion-bn-}, where the coefficients $B_{s}^{\pm}$ can be expressed in term of $\alpha_{k}(0,\pm 2)$ for $k=0,1,\cdots, s$. Specifically, we have
\begin{equation}
B_{2}^{\pm}=-\frac{12\Im \alpha_{0}(0,\pm 2)\Im\alpha_{2}(0,\pm 2)}{5\pi^2}=\frac{12\sqrt{3}}{25\pi^2}B\left(\frac{1}{2},\frac{1}{3}\right)\Im\alpha_{2}(0,\pm 2).
\end{equation}
Moreover, by combining \eqref{eq-def-alpha-s} and \eqref{eq-explicit-representation-a-s}, and taking the limit $\zeta\to\infty$ in $\mathbb{D}$, we obtain
\begin{equation*}
\alpha_{2}(0,\pm 2)=\lim\limits_{\zeta\to\infty}\zeta^{\frac{1}{2}}a_{2}(\zeta)=\int_{0}^{\infty e^{i\theta}}\frac{\psi(t)-2a_{1}(t)a_{1}'(t)-ta_{1}'(t)^2}{t^{\frac{1}{2}}}dt, \quad \theta\in[0,\frac{4\pi}{3}];
\end{equation*}
completing the proof of Corollary~\ref{cor-1}.
\end{proof}

\begin{remark}
By solving the asymptotic expansion of $\xi_{n}^{\pm}$ from \eqref{eq-asym-equation-xi-n}, and noting that $b_{n}^{\pm} = \pm 2(\xi_{n}^{\pm})^{\frac{3}{5}}$, we can express all the coefficients $B_{s}^{\pm}$ in terms of $\alpha_{k}(0,\pm 2)$ for $k = 0, 1, \ldots, s$. Furthermore, using \eqref{eq-def-alpha-s} and \eqref{eq-explicit-representation-a-s}, it follows that $\alpha_{k}(0,\pm 2)$ can be represented through multiple integrals. However, evaluating these integrals explicitly appears challenging.  Therefore, we numerically compute the integral in \eqref{eq-alpha2-integral} and substitute the resulting value into \eqref{eq-B2pm-explicit-by-alpha2}, obtaining $B_{2}^{+} = B_{2}^{-} \approx -0.005516$. This result aligns well with the numerical findings reported in \cite{Bender-Komijani-Wang}.
\end{remark}

\begin{remark}
In Corollary \ref{cor-1}, if $a\neq 0$ is fixed, according to the analysis in Remark \ref{rem-after-Thm-1}, we can also obtain the following asymptotic expansion of $(b_{n}^{\pm})^2-4a^3$ as $n\to+\infty$
\begin{equation}\label{eq-full-expansion-bn-fixed-a}
\left((b_{n}^{\pm})^2-4a^3\right)^{\frac{1}{2}}\sim \pm 2\left[\frac{\sqrt{3\pi}\Gamma\left(\frac{11}{6}\right)}{\Gamma\left(\frac{1}{3}\right)}\right]^{\frac{3}{5}}\left(n-\gamma_{n}^{\pm}\right)^{\frac{3}{5}}\left[1+\sum\limits_{s=2}^{\infty}\frac{B_{s}^{\pm}(n)}{(n-\gamma_{n}^{\pm})^s}\right],
\end{equation}
which is similar to \eqref{eq-full-expansion-bn+} and \eqref{eq-full-expansion-bn-}. Here, $\gamma_{n}^{\pm}=-\frac{1}{2}\pm\frac{1}{3}+\mathcal{O}(n^{-\frac{2}{5}})$ as $n\to+\infty$ and $B_{s}^{\pm}(n)$ are all bounded but depend on $n$.
\end{remark}

When $b=0$, then $A(\xi)=-1$ and $B(\xi)=0$. Noting that $\alpha_{2s+1}(-1,0)=0$, we have the following result.
\begin{corollary}\label{cor-2}
For $b=0$, {\it i.e.} $A(\xi)=-1, B(\xi)=0$, there exists a sequence $\{a_{n}\}$ such that the PI solutions $y(t; a_{n}, b)$ are separatrix solutions and
\begin{equation}\label{eq-asym-an-b=0}
a_{n}\sim -\left[\frac{\sqrt{3\pi}\Gamma\left(\frac{11}{6}\right)}{\Gamma\left(\frac{1}{3}\right)}\right]^{\frac{2}{5}}\left(n-\frac{1}{2}\right)^{\frac{2}{5}}\left[1+\sum\limits_{s=1}^{\infty}\frac{A_{s}}{(n-\frac{1}{2})^{2s}}\right]
\end{equation}
as $n\to\infty$, where $A_{s}$ can be expressed in terms of $\alpha_{2k}(-1,0), k=1,2,\dots,s$. In particular, we have
\begin{equation}\label{eq-A1-explicit-by-alpha2}
A_{1}=\frac{8\sqrt{3}}{25\pi^2}\B\left(\frac{1}{2},\frac{1}{3}\right)\Im \alpha_{2}(-1,0),
\end{equation}
and
\begin{equation}\label{eq-alpha2-integral-A1}
\alpha_{2}(-1,0)=\int_{0}^{\infty e^{i\theta}}\frac{\psi(t)}{t^{\frac{1}{2}}}dt, \qquad \theta\in [0,\frac{4\pi}{3}]
\end{equation}
where $\psi(\zeta)$ is defined in \eqref{eq-def-varphi-psi} and $\mathrm{B}(\cdot,\cdot)$ is the beta function.

\end{corollary}

\begin{remark}
Through numerical simulations, we find that $A_{1}\approx -0.009651797894657,$ which is consistent with the numerical results reported in \cite{Bender-Komijani-Wang}
\end{remark}

\begin{remark}
In Corollary \ref{cor-2}, if $b\neq 0$ is fixed, according to the analysis in Remark \ref{rem-after-Thm-1}, we can also obtain the following asymptotic expansion of $a_{n}$ as $n\to\infty$
\begin{equation}
\left(a_{n}^3-\frac{b^2}{4}\right)^{\frac{1}{3}} \sim -\left[\frac{\sqrt{3\pi}\Gamma\left(\frac{11}{6}\right)}{\Gamma\left(\frac{1}{3}\right)}\right]^{\frac{2}{5}}\left(n-\mu_{n}\right)^{\frac{2}{5}}\left[1+\sum\limits_{s=1}^{\infty}\frac{A_{s}(n)}{(n-\mu_{n})^{2s}}\right],
\end{equation}
which is an analogue to \eqref{eq-asym-an-b=0}. Here, $\mu_{n}=\frac{1}{2}+\mathcal{O}(n^{-\frac{3}{5}})$ as $n\to\infty$ and $A_{s}(n)$ are all bounded but depend on $n$.
\end{remark}

\

Let us turn to the real tritronqu\'{e}e solution of PI.
We can obtain the full asymptotic expansions for the pole parameters $(p_{n}, H_{n})$ as $n\to\infty$,
where $p_n$ denotes the $n$-th pole of this solution.
It is a direct consequence of the combination of Theorem \ref{Thm-s-large-pole} and the fact that the Stokes multipliers corresponding to PI's real tritronqu\'{e}e solution are $s_{1}=s_{-1}=i, s_{2}=s_{-2}=0, s_{0}=i$.

\begin{corollary}
Let $(p_{n}, H_{n})$ be the pole parameters of the $n$-th real pole of PI's real tritronqu\'{e}e solution, then
\begin{equation}\label{eq-asym-expansion-pn-Hn}
\begin{split}
p_{n}&\sim 2C_{0}\left(\frac{4n-2}{\kappa^{2}(C_{0})}\right)^{\frac{4}{5}}\left[1+\sum\limits_{s=1}^{\infty}\frac{\rho_{s}}{\left(n-\frac{1}{2}\right)^{2s}}\right],\\
H_{n}&\sim -\frac{1}{7}\left(\frac{4n-2}{\kappa^{2}(C_{0})}\right)^{\frac{6}{5}}\left[1+\sum\limits_{s=1}^{\infty}\frac{\mathcal{H}_{s}}{\left(n-\frac{1}{2}\right)^{2s}}\right]
\end{split}
\end{equation}
as $n\to\infty$, where $C_{0}$ and $\kappa^{2}(C_{0})$ are given in \cite[Lemma 2.1]{LongLiWang}. Moreover, the coefficients $\rho_{s}$ and $\mathcal{H}_{s}$ can be expressed in terms of $\beta_{0},\beta_{1},\dots, \beta_{s}$ and $\tilde{h}_{0},\tilde{h}_{1},\cdots,\tilde{h}_{s-1}$.
In particular, we have
\begin{equation}\label{eq-explicit-rho1-H1}
\rho_{1}=\frac{4\beta_{0}\beta_{1}}{5\pi^2}
\quad\text{and}\quad 
\mathcal{H}_{1}=\frac{6\beta_{0}\beta_{1}}{5\pi^2}+\tilde{h}_{0}.
\end{equation}
\end{corollary}

\begin{proof}
From \eqref{eq-stokes-full-asymptotic-thm2}, we know that there exists a sequence $\{\xi_{n}\}$ such that $s_{2}=s_{-2}(\xi_{n})=(\xi_{n})=0, s_{0}(\xi_{n})=s_{1}(\xi_{n})=s_{-1}(\xi_{n})=i$ and
\begin{equation}\label{eq-asym-xi-n-pole-case}
i\beta_{0}\xi_{n}+\frac{i\beta_{1}}{\xi_{n}}+\frac{i\beta_{2}}{\xi_{n}^{3}}+\cdots  \sim (n-\frac{1}{2})\pi
\end{equation}
as $n\to\infty$. To conclude that the right hand side of the above approximation is $(n-\frac{1}{2})\pi$ but not  $-(n-\frac{1}{2})\pi$, we should observe that $\beta_{0}=\tilde{E}_{0}$ and compute $\tilde{E_{0}}$ numerically to find that $i\beta_{0}>0$. One can obtain the same conclusion by comparing the definition of $\tilde{E}_{0}$ with \cite[eqs. (2.2), (2.6) and (2.7)]{LongLiWang}. Actually, one can find that $\tilde{E}_{0}=\frac{1}{4}\kappa^{2}(C_{0})\pi i$, where $\kappa^{2}(C)$ is defined in \cite[Lemma 2.1]{LongLiWang}. Solving the asymptotic expansion of $\xi_{n}$ as $n\to\infty$ from \eqref{eq-asym-xi-n-pole-case} and inserting it into
\begin{equation}\label{eq-pn-Hn-by-xi-n}
p_{n}=2C_{0}\xi_{n}^{\frac{4}{5}} \quad\text{and}\quad H_{n}=-\frac{1}{7}\xi_{n}^{\frac{6}{5}}\left(1+\tilde{h}(\xi_{n})\xi_{n}^{-2}\right),
\end{equation}
we obtain \eqref{eq-asym-expansion-pn-Hn} and conclude that $\rho_{s}$ and $\mathcal{H}_{s}$ can be expressed in terms of $\beta_{k}, k=0,1,\cdots,s$ and $\tilde{h}_{k}, k=0,1,\cdots,s-1$. Precisely, form \eqref{eq-asym-xi-n-pole-case}, we have
\begin{equation}
\xi_{n}\sim \frac{(n-\frac{1}{2})\pi}{i\beta_{0}}\left[1+\frac{\beta_{1}\beta_{0}}{(n-\frac{1}{2})\pi i\beta_{0}\xi_{n}}+\frac{\beta_{0}\beta_{2}}{(n-\frac{1}{2})\pi i\beta_{0}\xi_{n}^{3}}+\cdots\right].
\end{equation}
Substitute it into \eqref{eq-pn-Hn-by-xi-n} and comparing the coefficients of the sub-leading term in \eqref{eq-asym-expansion-pn-Hn} and \eqref{eq-pn-Hn-by-xi-n}, we get \eqref{eq-explicit-rho1-H1}.
\end{proof}

\begin{remark}
By numerically evaluating the integrals in \eqref{eq-def-tilde-E0} and \eqref{eq-explicit-integral-d0-d1}, and substituting the resulting values into \eqref{eq-explicit-tilde-h0-beta1}, followed by substitution into \eqref{eq-explicit-rho1-H1}, we obtain $\rho_{1}\approx 0.00451478\cdots$ and $\mathcal{H}_{1}\approx -0.00881494\cdots$, which are consistent with the numerical results reported in \cite{LongLiWang}.
\end{remark}

\section{Uniform asymptotics and proof of Lemma \ref{lem-higher-order-approximation-ODE-1}}
\label{sec:proof-lemma}
To prove Lemma \ref{lem-higher-order-approximation-ODE-1}, we first use induction to show that
\begin{equation}\label{bounded-appro-a}
a_{s}(\zeta)=\mathcal{O}(\zeta^{-\frac{1}{2}}),\quad a_{s}^{(k)}(\zeta)=\mathcal{O}(\zeta^{-\frac{3}{2}}), \quad k=1,2,\dots,
\end{equation}
as $\zeta\to\infty$ in $\mathbb{D}$ for all $s\in\mathbb{N}$.

Let $s=1,2$. It follows from \eqref{eq-asymp-relation-zeta-z} and \eqref{eq-def-varphi-psi} that
\begin{equation}
\varphi(\zeta)=\mathcal{O}(\zeta^{-\frac{7}{5}}), \quad \psi(\zeta)=\mathcal{O}(\zeta^{-2}),
\qquad\text{as}~\zeta\to\infty,
\end{equation}
which, together with the explicit representations of $a_{1}(\zeta)$ and $a_{2}(\zeta)$ in \eqref{eq-explicit-representation-a-s}, leads to \eqref{bounded-appro-a}.

Suppose \eqref{bounded-appro-a} holds for $s=1,2,\dots, m-1$ with $m>2$.
It then follows from \eqref{eq-An-coeff-compare} and the analyticity of $a_{m}(\zeta)$ in $\mathbb{D}$ that
\begin{equation}\label{eq-am-integral-representation}
a_{m}(\zeta)=\frac{1}{2\zeta^{\frac{1}{2}}}\int_{0}^{\zeta}\frac{F_{m}(t)}{t^{\frac{1}{2}}}dt,
\end{equation}
where $F_{m}(\zeta)$ is a polynomial of $a_{s}'(\zeta), a_{s}''(\zeta), a_{s}'''(\zeta), s=1,2,\cdots, m-1$. Therefore, $F_{m}(\zeta)$ and its derivatives all have an estimate of $\mathcal{O}(\zeta^{-\frac{3}{2}})$ as $\zeta\to\infty$. Hence, the integral in \eqref{eq-am-integral-representation} is convergent when $\zeta\to\infty$, and then $a_{m}(\zeta)=\mathcal{O}(\zeta
^{-\frac{1}{2}})$. According to \eqref{eq-An-coeff-compare} [or \eqref{eq-am-integral-representation}],
we also have
\begin{equation}\label{eq-relation-a-F}
a_{m}(\zeta)+2\zeta a_{m}'(\zeta)=F_{s}(\zeta)=\mathcal{O}(\zeta^{-\frac{3}{2}}),
\end{equation}
which implies that the derivative of $a_{m}(\zeta)$ of any order has the estimate of $\mathcal{O}(\zeta^{-\frac{3}{2}})$ as $\zeta\to\infty$.
Therefore, \eqref{bounded-appro-a} holds for any $s=1,2,\dots$ by induction.

It follows from \eqref{eq-def-An} and \eqref{bounded-appro-a} that $\zeta^{\frac{1}{2}}\mathcal{A}_{n}(\zeta,\xi)=\mathcal{O}(\xi^{-1})$ as $\xi\to+\infty$ uniformly for all $\zeta\in\mathbb{D}$. Hence,
\begin{equation}
e^{\xi\hat{\zeta}^{\frac{3}{2}}}=\exp\{\xi\zeta^{\frac{3}{2}}\left(1+\zeta^{-1}\mathcal{A}_{n}(\zeta,\xi)\right)^{\frac{3}{2}}\}=\exp\{\xi\zeta^{\frac{3}{2}}\}\cdot \mathcal{O}(\xi\zeta^{\frac{1}{2}}\mathcal{A}_{n}(\zeta,\xi))
\end{equation}
as $\xi\to+\infty$, where $\zeta\in\mathbb{D}$.
Note that $\zeta^{\frac{3}{2}}$ is purely imaginary when $\zeta\in\mathbb{D}\cap\mathbb{S}$.
Therefore, $e^{\xi\hat{\zeta}^{\frac{3}{2}}}=\mathcal{O}(1)$ as $\xi\to+\infty$ uniformly for $\zeta\in\mathbb{D}\cap\mathbb{S}$.

\

Now, we are in a position to prove Lemma~\ref{lem-higher-order-approximation-ODE-1}.

\textbf{Proof of Lemma \ref{lem-higher-order-approximation-ODE-1}}
It is clear from \eqref{bounded-appro-a} and \eqref{eq-am-integral-representation} that the limits  $\lim\limits_{\zeta\to\infty}\zeta^{\frac{1}{2}}a_{s}(\zeta)$ exist for all $s=1,2,\cdots$. It is also evident that $\Ai_{n}(\zeta,\xi)$ and $\Bi_{n}(\zeta,\xi)$ satisfy the equation
\begin{equation}
\frac{d^2w}{d\zeta^2}=\xi^2\left[\zeta+Q_{n}(\zeta,\xi)\right]w,
\end{equation}
where
\begin{equation}
Q_{n}(\zeta,\xi)=\left\{\mathcal{A}_{n}+(\zeta+\mathcal{A}_{n})(2+\mathcal{A}'_{n})\mathcal{A}'_{n}\right\}+\frac{3{\mathcal{A}''_{n}}^{2}-2(1+\mathcal{A}'_{n})\mathcal{A}'''_{n}}{4\xi^2(1+\mathcal{A}'_{n})^2}.
\end{equation}
It follows from \eqref{eq-def-An} and \eqref{eq-relation-a-F} that $\mathcal{A}_{n}+2\zeta\mathcal{A}_{n}'=\mathcal{O}(\zeta^{-\frac{3}{2}})$,
which, together with \eqref{bounded-appro-a}, imply that
$Q_{n}(\zeta,\xi)=\mathcal{O}(\zeta^{-\frac{3}{2}})$ as $\zeta\to\infty$ in $\mathbb{D}$.
For any solution $W(\zeta,\xi)$ of \eqref{second-order-equation-W}, we have
\begin{equation}
\frac{d^2W}{d\zeta^2}=\xi^2\left[\zeta+Q_{n}(\zeta,\xi)\right]W+\left[\xi\varphi(\zeta)+\psi(\zeta)-\xi^{2}Q_{n}(\zeta,\xi)\right]W.
\end{equation}
Then there exist two constants $C_{1}$ and $C_{2}$ such that for all $\zeta\in\mathbb{D}\cap\mathbb{S}$
\begin{equation}\label{eq-constant-variation-formula}
\begin{split}
W(\zeta,\xi)=&C_{1}\Ai_{n}(\zeta,\xi)+C_{2}\Bi_{n}(\zeta,\xi)\\
&\quad+\int_{0}^{\zeta}\frac{\Ai_{n}(t,\xi)\Bi_{n}(\zeta,\xi)-\Ai_{n}(\zeta,\xi)\Bi_{n}(t,\xi)}{\mathcal{W}(\Ai_{n}(t,\xi),
\Bi_{n}(t,\xi))}W(t,\xi)R(t,\xi)dt
\end{split}
\end{equation}
where
\begin{equation}\label{eq-R-asym-zeta-infty}
R(\zeta,\xi)=\left[\xi\varphi(\zeta)+\psi(\zeta)-\xi^{2}Q_{n}(\zeta,\xi)\right]=\mathcal{O}(\zeta^{-\frac{3}{2}}) ,\quad \zeta\to\infty,~ \zeta\in\mathbb{D}
\end{equation}
and
\begin{equation}\label{eq-R-asym-xi-infty}
R(\zeta,\xi)=\left[\xi\varphi(\zeta)+\psi(\zeta)-\xi^{2}Q_{n}(\zeta,\xi)\right]=\mathcal{O}(\xi^{-2n+1}) ,\quad \xi\to+\infty
\end{equation}
uniformly for all $\zeta\in\mathbb{D}\cap\mathbb{S}$. Applying the iterative argument used in the proof of \cite[Theorem 2]{APC}, we see that $W(\zeta,\xi)$ is bounded on $\mathbb{D}\cap\mathbb{S}$. A combination of this fact and \eqref{eq-R-asym-zeta-infty} yields that the last term in \eqref{eq-constant-variation-formula} is integrable.
Using the Wronskian for Airy functions (\cite[eq.~(9.2.7)]{NIST-handbook}), we have
\begin{equation}\label{eq-Wronskian-approx}
\mathcal{W}((\Ai_{n}(t,\xi),
\Bi_{n}(t,\xi)))\sim \frac{1}{\pi}\xi^{\frac{2}{3}},\qquad \xi\to+\infty,
\end{equation}
we obtain \eqref{eq-higher-approximation-ODE-1}, and
\begin{equation}\label{eq-approx-r12-first}
r_{1,2}(\zeta,\xi)=\mathcal{O}(\xi^{-2n+\frac{1}{3}})
\end{equation}
as $\xi\to+\infty$ uniformly for all $\zeta\in\mathbb{D}\cap\mathbb{S}$.

To get the desired approximations of $r_{1,2}(\zeta,\xi)$ as stated in Lemma \ref{lem-higher-order-approximation-ODE-1}, we need the following approximation of $W(\zeta,\xi)$
\begin{equation}\label{eq-bound-W-by-Airy}
W(\zeta,\xi)\leq M(|C_{1}||\Ai_{n}(\zeta,\xi)|+|C_{2}||\Bi_{n}(\zeta,\xi)|)
\end{equation}
which can be obtained as a direct consequence of \eqref{eq-higher-approximation-ODE-1} and
\eqref{eq-approx-r12-first}.
Let $\zeta^{\star}$ be a point on the integration contour in \eqref{eq-constant-variation-formula} satisfying $|\zeta^{\star}|\sim \xi^{-\frac{1}{3}}$.
A comparison of  \eqref{eq-constant-variation-formula} and \eqref{eq-higher-approximation-ODE-1} leads to
\begin{equation}
\begin{split}
r_{1}(\zeta,\xi)&=-\int_{0}^{\zeta}\frac{\Bi_{n}(t,\xi)W(t,\xi)R(t,\xi)}{\mathcal{W}(\Ai_{n}(t,\xi),\Bi_{n}(t,\xi))}dt\\
&=-\int_{0}^{\zeta^{\star}}\frac{\Bi_{n}(t,\xi)W(t,\xi)R(t,\xi)}{\mathcal{W}(\Ai_{n}(t,\xi),\Bi_{n}(t,\xi))}dt-\int_{\zeta^{\star}}^{\zeta}\frac{\Bi_{n}(t,\xi)W(t,\xi)R(t,\xi)}{\mathcal{W}(\Ai_{n}(t,\xi),\Bi_{n}(t,\xi))}dt\\
&=I_{1}+I_{2}.
\end{split}
\end{equation}
The integrand of $I_{1}$ is bounded by $\mathcal{O}(\xi^{-2n+\frac{1}{3}})$, then
\begin{equation}\label{eq-approx-I1}
I_{1}=\mathcal{O}(\xi^{-2n})\quad \text{as} \quad \xi\to+\infty.
\end{equation}
To estimate $I_{2}$, we use \eqref{eq-R-asym-xi-infty}, \eqref{eq-Wronskian-approx} and \eqref{eq-bound-W-by-Airy}  to get
\begin{equation}\label{eq-approx-I2}
\left|I_{2}\right|\leq \frac{4\pi M^2}{\xi} \int_{\zeta^{\star}}^{\zeta}\left|t^{-\frac{1}{2}}R(t,\xi)\right| dt\leq \tilde{M}\xi^{-2n}
\end{equation}
where $\tilde{M}>0$ is a constant. This fact, together with \eqref{eq-R-asym-zeta-infty}, \eqref{eq-approx-I1} and \eqref{eq-approx-I2}, lead to $r_{1}(\zeta,\xi)=\mathcal{
O}(\xi^{-2n})$ as $\xi\to+\infty$. In a similar manner, we can show that $r_{2}(\zeta,\xi)$ has the same estimate as $r_{1}(\zeta,\xi)$; completing the proof of Lemma~\ref{lem-higher-order-approximation-ODE-1}.

\section{Asymptotic matching and proof of Theorems \ref{Thm-s-large-value} and \ref{Thm-s-large-pole}}
\label{sec:proof-theorem}

In this section, we derive full asymptotic expansions of the Stokes multipliers $s_{k}$'s in case (ii) [\textit{resp.} case (iii)], which is done by matching the uniform asymptotic approximation of $W(\zeta,\xi)$ stated in Lemma \ref{lem-higher-order-approximation-ODE-1} [\textit{resp.} Lemma~\ref{lem-higher-order-approximation-ODE-2}] and the asymptotic expansions of the canonical solutions $\hat{\Phi}_{k}$ in \eqref{eq-canonical-solutions-hat-Phi}. The approach is similar to that used in \cite{LongLi, LongLiWang}, and hence we only state some crucial steps, referring to these two articles for details.

\

\textbf{Proof of Theorem \ref{Thm-s-large-value}}

Since the Stokes multipliers $s_{k}$'s depend on $\xi$,
hence we will only take the limit $\lambda\to\infty$
when we do the asymptotic matching between the Airy functions and $\Phi_{k}$'s.
Recall the asymptotics for the Airy functions for large arguments (see~\cite[eqs. (9.2.12), (9.7.5), (9.2.10)]{NIST-handbook})
\begin{eqnarray}\label{eq-asym-Ai}
\begin{cases}
\Ai(Z)\sim
\frac{1}{2\sqrt{\pi}}Z^{-\frac{1}{4}}e^{-\frac{2}{3}Z^{\frac{3}{2}}},
\quad & \arg{Z}\in(-\pi,\pi),\\
\Ai(Z)\sim
\frac{1}{2\sqrt{\pi}}Z^{-\frac{1}{4}}e^{-\frac{2}{3}Z^{\frac{3}{2}}}
+\frac{i}{2\sqrt{\pi}}Z^{-\frac{1}{4}}e^{\frac{2}{3}Z^{\frac{3}{2}}},
\quad & \arg{Z}\in\left(\frac{\pi}{3},\frac{5\pi}{3}\right),\\
\Ai(Z)\sim
\frac{1}{2\sqrt{\pi}}Z^{-\frac{1}{4}}e^{-\frac{2}{3}Z^{\frac{3}{2}}}
-\frac{i}{2\sqrt{\pi}}Z^{-\frac{1}{4}}e^{\frac{2}{3}Z^{\frac{3}{2}}},
\quad &\arg{Z}\in\left(-\frac{5\pi}{3},-\frac{\pi}{3}\right)
\end{cases}
\end{eqnarray}
and
\begin{eqnarray}\label{eq-asym-Bi}
\begin{cases}
\Bi(Z)\sim
\frac{i}{2\sqrt{\pi}}Z^{-\frac{1}{4}}e^{-\frac{2}{3}Z^{\frac{3}{2}}}
+\frac{1}{\sqrt{\pi}}Z^{-\frac{1}{4}}e^{\frac{2}{3}Z^{\frac{3}{2}}},
\quad&\arg{Z}\in\left(-\frac{\pi}{3},\pi\right),\\
\Bi(Z)\sim
\frac{i}{2\sqrt{\pi}}Z^{-\frac{1}{4}}e^{-\frac{2}{3}Z^{\frac{3}{2}}}
+\frac{1}{2\sqrt{\pi}}Z^{-\frac{1}{4}}e^{\frac{2}{3}Z^{\frac{3}{2}}},
\quad & \arg{Z}\in\left(\frac{\pi}{3},\frac{5\pi}{3}\right),\\
\Bi(Z)\sim
-\frac{i}{2\sqrt{\pi}}Z^{-\frac{1}{4}}e^{-\frac{2}{3}Z^{\frac{3}{2}}}
+\frac{1}{\sqrt{\pi}}Z^{-\frac{1}{4}}e^{\frac{2}{3}Z^{\frac{3}{2}}},
\quad & \arg{Z}\in\left(-\pi,\frac{\pi}{3}\right).
\end{cases}
\end{eqnarray}

When $\lambda\to\infty$ with $\arg\lambda\sim\frac{\pi}{5}$, it follows from $\lambda=\xi^{\frac{2}{5}}z$  and \eqref{eq-asymp-relation-zeta-z} that $\arg\zeta\sim\frac{\pi}{3}$. Substituting \eqref{eq-asymp-relation-zeta-z} into \eqref{eq-asym-Ai} and \eqref{eq-asym-Bi},
and using $\lambda=\xi^{\frac{2}{5}}z$ again, we get
\begin{eqnarray}\label{eq-Ai-Bi-pi/5}
\begin{cases}
\left(\frac{\zeta}{f(z)}\right)^{\frac{1}{4}}\Ai_{n}(\xi^{\frac{2}{3}}\zeta)=\left(\frac{\zeta}{f(z)}\right)^{\frac{1}{4}}\left(\frac{d\hat{\zeta}}{d\zeta}\right)^{-\frac{1}{2}}\Ai\left(\xi^{\frac{2}{3}}\hat{\zeta}\right)\sim c_{1}\frac{-1}{\sqrt{2}}\lambda^{-\frac{3}{4}}e^{-\frac{4}{5}\lambda^{\frac{5}{2}}},\\
\left(\frac{\zeta}{f(z)}\right)^{\frac{1}{4}}\Bi_{n}(\xi^{\frac{2}{3}}\zeta)=\left(\frac{\zeta}{f(z)}\right)^{\frac{1}{4}}\left(\frac{d\hat{\zeta}}{d\zeta}\right)^{-\frac{1}{2}}\Bi\left(\xi^{\frac{2}{3}}\hat{\zeta}\right)\sim ic_{1}\frac{-1}{\sqrt{2}}\lambda^{-\frac{3}{4}}e^{-\frac{4}{5}\lambda^{\frac{5}{2}}}+2c_{2}\frac{1}{\sqrt{2}}\lambda^{-\frac{3}{4}}e^{\frac{4}{5}\lambda^{\frac{5}{2}}}
\end{cases}
\end{eqnarray}
as $\lambda\to\infty$ (and $\zeta\to\infty$ accordingly), where
\begin{equation}
\begin{split}
c_{1}&=\frac{-1}{2\sqrt{\pi}}\xi^{-\frac{2}{15}}\exp\left\{-\sum\limits_{s=0}^{n}\alpha_{s}(A(\xi),B(\xi))\xi^{-s+1}\right\},\\
c_{2}&=\frac{1}{2\sqrt{\pi}}\xi^{-\frac{2}{15}}\exp\left\{\sum\limits_{s=0}^{n}\alpha_{s}(A(\xi),B(\xi))\xi^{-s+1}\right\}.
\end{split}
\end{equation}
From \eqref{eq-canonical-solutions}, a straightforward calculation yields
\begin{equation}\label{eq-asym-Phi21-Phi22}
\left((\hat{\Phi}_{k})_{21},(\hat{\Phi}_{k})_{22}\right)\sim \left(\frac{1}{\sqrt{2}}\lambda^{-\frac{3}{4}}e^{\frac{4}{5}\lambda^{\frac{5}{2}}},\,
\frac{-1}{\sqrt{2}}\lambda^{-\frac{3}{4}}e^{-\frac{4}{5}\lambda^{\frac{5}{2}}}\right),
\qquad k\in\mathbb{Z}
\end{equation}
as $\lambda\to\infty$.
Substituting \eqref{eq-Ai-Bi-pi/5} into \eqref{eq-higher-approximation-ODE-1}, noting that $Y=\left(\frac{\zeta}{f(z)}\right)^{\frac{1}{4}}$ and $Y$ is the first component of $\hat{\Phi}$,
and then comparing the resulting equation with \eqref{eq-asym-Phi21-Phi22},
we get
\begin{equation}\label{eq-Y-pi/5}
\begin{split}
W=&[C_{1}+r_{1,1}(\xi)]c_{1}(\hat{\Phi}_{1})_{22}+[C_{2}+r_{2,1}(\xi)][ic_{1}(\hat{\Phi}_{1})_{22}+2c_{2}(\hat{\Phi}_{1})_{21}]\\
=&[C_{2}+r_{2,1}(\xi)]2c_{2}(\hat{\Phi}_{1})_{21}+\{[C_{1}+r_{1,1}(\xi)]c_{1}+[C_{2}+r_{2,1}(\xi)]ic_{1}\}(\hat{\Phi}_{1})_{22}
\end{split}
\end{equation}
with $\lambda\in\Omega_{1}$.

When $\lambda\to\infty$ with $\arg\lambda\sim\frac{3\pi}{5}$,
one has $\arg{\zeta}\sim\pi$. Using the asymptotics of the Airy functions with $\arg{Z}\sim\pi$ in \eqref{eq-asym-Ai} and \eqref{eq-asym-Bi}, we have
\begin{eqnarray}\label{eq-Ai-Bi-3pi/5}
\begin{cases}
\left(\frac{\zeta}{f(z)}\right)^{\frac{1}{4}}\Ai_{n}(\xi^{\frac{2}{3}}\zeta)=\left(\frac{\zeta}{f(z)}\right)^{\frac{1}{4}}\left(\frac{d\hat{\zeta}}{d\zeta}\right)^{-\frac{1}{2}}\Ai\left(\xi^{\frac{2}{3}}\hat{\zeta}\right)\sim c_{1}\frac{-1}{\sqrt{2}}\lambda^{-\frac{1}{4}}e^{-\frac{4}{5}\lambda^{\frac{5}{2}}}+ic_{2}\frac{1}{\sqrt{2}}\lambda^{-\frac{1}{4}}e^{\frac{4}{5}\lambda^{\frac{5}{2}}},\\
\left(\frac{\zeta}{f(z)}\right)^{\frac{1}{4}}\Bi_{n}(\xi^{\frac{2}{3}}\zeta)=\left(\frac{\zeta}{f(z)}\right)^{\frac{1}{4}}\left(\frac{d\hat{\zeta}}{d\zeta}\right)^{-\frac{1}{2}}\Bi\left(\xi^{\frac{2}{3}}\hat{\zeta}\right)\sim ic_{1}\frac{-1}{\sqrt{2}}\lambda^{-\frac{1}{4}}e^{-\frac{4}{5}\lambda^{\frac{5}{2}}}+c_{2}\frac{1}{\sqrt{2}}\lambda^{-\frac{1}{4}}e^{\frac{4}{5}\lambda^{\frac{5}{2}}}
\end{cases}
\end{eqnarray}
as $\lambda\to\infty$ (and $\zeta\to\infty$ accordingly).
In a similar manner for deriving \eqref{eq-Y-pi/5}, we get
\begin{equation}\label{eq-Y-3pi/5-C}
\begin{split}
W=&\{[C_{1}+r_{1,2}(\xi)]ic_{2}+[C_{2}+r_{2,2}(\xi)]c_{2}\}(\hat{\Phi}_{2})_{21}\\
&+\{[C_{1}+r_{1,2}(\xi)]c_{1}+[C_{2}+r_{2,2}(\xi)]ic_{1}\}(\hat{\Phi}_{2})_{22}
\end{split}
\end{equation}
$\lambda\in\Omega_{2}$.
Combining \eqref{eq-Y-pi/5} with \eqref{eq-Y-3pi/5-C}, and noting that
$$\left((\hat{\Phi}_{2})_{21},(\hat{\Phi}_{2})_{22}\right)=\left((\hat{\Phi}_{1})_{21},(\hat{\Phi}_{1})_{22}\right)\left(\begin{matrix}1&s_{1}\\0&1\end{matrix}\right),$$
we further obtain
\begin{equation}
r_{1,1}(\xi)-r_{1,2}(\xi)=i(r_{2,2}(\xi)-r_{2,1}(\xi))
\end{equation}
and
\begin{equation}
\begin{split}
s_{1}&=\frac{(C_{2}-iC_{1})c_{2}+(2r_{2,1}(\xi)-ir_{1,2}(\xi)-r_{2,2}(\xi))c_{2}}{(C_{1}+iC_{2})c_{1}+r_{1,2}(\xi)c_{1}+r_{2,2}(\xi)ic_{1}}\\
&=-\frac{ic_{2}}{c_{1}}+\frac{2(r_{2,1}(\xi)-r_{2,2}(\xi))c_{2}}{(C_{1}+iC_{2}+r_{1,2}(\xi)+ir_{2,2}(\xi))c_{1}}.
\end{split}
\end{equation}
By Lemma~\ref{lem-higher-order-approximation-ODE-1}, $r_{j,k}(\xi)=\mathcal{O}\left(\frac{|C_{1}|+|C_{2}|}{\xi^{n}}\right)$ as $\xi\to+\infty$ for $j=1,2$ and $k=1,2$. Hence,
\begin{equation}
s_{1}=i \exp\left\{\sum\limits_{s=0}^{2n}2\alpha_{s}(A(\xi),B(\xi))\xi^{-s+1}\right\}\left(1+\mathcal{O}(\xi^{-2n})\right)
\end{equation}
as $\xi\to+\infty$.

The asymptotic expansion of $s_{-1}$ in \eqref{eq-stokes-full-asymptotic-thm1}
follows from the conjugate relation $s_{-1}=\overline{s_{1}}$, and the asymptotic expansions for other stokes multipliers follow from the constraint $s_{k}=i(1+s_{k+2}s_{k+3})$.

\

\textbf{Proof of Theorem \ref{Thm-s-large-pole}}

The full asymptotic expansion of $s_{1}$ in Theorem \ref{Thm-s-large-pole} can be derived similarly to those in Theorem~\ref{Thm-s-large-value}. However, one shoul note that the asymptotic behavior of $\hat{\Phi}(\lambda,t)$ as $\lambda\to\infty$ has been varied when $t\to p$ (compare \eqref{eq-canonical-solutions-hat-Phi} and \eqref{eq-canonical-solutions-near-pole}). Hence, when as $\lambda\to\infty$ and $t=p$, we should replace \eqref{eq-Ai-Bi-pi/5} and \eqref{eq-asym-Phi21-Phi22} by
\begin{eqnarray}\label{eq-Ai-Bi-pi/5-2}
\begin{cases}
\left(\frac{\eta}{\tilde{f}(z)}\right)^{\frac{1}{4}}\Ai_{n}(\xi^{\frac{2}{3}}\eta)=\left(\frac{\eta}{\tilde{f}(z)}\right)^{\frac{1}{4}}\left(\frac{d\hat{\eta}}{d\eta}\right)^{-\frac{1}{2}}\Ai\left(\xi^{\frac{2}{3}}\hat{\eta}\right)\sim \tilde{c}_{1}\frac{-i}{\sqrt{2}}\lambda^{-\frac{3}{4}}e^{-\frac{4}{5}\lambda^{\frac{5}{2}}},\\
\left(\frac{\eta}{\tilde{f}(z)}\right)^{\frac{1}{4}}\Bi_{n}(\xi^{\frac{2}{3}}\eta)=\left(\frac{\eta}{\tilde{f}(z)}\right)^{\frac{1}{4}}\left(\frac{d\hat{\eta}}{d\eta}\right)^{-\frac{1}{2}}\Bi\left(\xi^{\frac{2}{3}}\hat{\eta}\right)\sim i\tilde{c}_{1}\frac{-i}{\sqrt{2}}\lambda^{-\frac{3}{4}}e^{-\frac{4}{5}\lambda^{\frac{5}{2}}}+2\tilde{c}_{2}\frac{-i}{\sqrt{2}}\lambda^{-\frac{3}{4}}e^{\frac{4}{5}\lambda^{\frac{5}{2}}}
\end{cases}
\end{eqnarray}
and
\begin{equation}\label{eq-asym-Phi21-Phi22-near-pole}
\left((\hat{\Phi}_{k})_{21},(\hat{\Phi}_{k})_{22}\right)\sim \left(\frac{-i}{\sqrt{2}}\lambda^{-\frac{3}{4}}e^{\frac{4}{5}\lambda^{\frac{5}{2}}},\,
\frac{-i}{\sqrt{2}}\lambda^{-\frac{3}{4}}e^{-\frac{4}{5}\lambda^{\frac{5}{2}}}\right),
\qquad k\in\mathbb{Z}
\end{equation}
respectively,
where
\begin{equation}
\begin{split}
\tilde{c}_{1}&=\frac{i}{2\sqrt{\pi}}\xi^{-\frac{2}{15}}\exp\left\{-\sum\limits_{s=0}^{n}\beta_{s}\xi^{-2s+1}\right\},\\
\tilde{c}_{2}&=\frac{i}{2\sqrt{\pi}}\xi^{-\frac{2}{15}}\exp\left\{\sum\limits_{s=0}^{n}\beta_{s}\xi^{-2s+1}\right\}.
\end{split}
\end{equation}
Hence, by a similar way to derive $s_{1}$ in the proof of Theorem \ref{Thm-s-large-value}, we obtain, for any $n\in\mathbb{N}$, that
\begin{equation}
s_{1}=-i \exp\left\{\sum\limits_{s=0}^{2n}2\beta_{s}\xi^{-2s+1}\right\}\left(1+\mathcal{O}(\xi^{-2n})\right)
\end{equation}
as $\xi\to+\infty$.

Note that the Stokes multipliers corresponding to the real tritronqu\'{e}e solution of PI equation are $s_{1}=s_{-1}=i, s_{2}=s_{-2}=0, s_{0}=i$. This implies that $\beta_{s}$ are purely imaginary for all $s \in \mathbb{N}$. Making use of this fact and the constrain $s_{k} = i(1 + s_{k+2}s_{k+3})$, we obtain the full asymptotic expansion of the other Stokes multipliers.

\appendix

\section{Monodromy theory of PI}
\label{sec:AppA}

Painlev\'e equations can be expressed as compatibility conditions of
Lax pairs. In particular, consider the Lax pair of PI (see \cite{Kapaev-Kitaev-1993})
\begin{equation}\label{lax pair-I}
\left\{\begin{aligned}
\frac{\partial\Psi}{\partial\lambda}
&=\left\{(4\lambda^4+t+2y^2)\sigma_{3}-i(4y\lambda^2+t+2y^2)\sigma_{2}
-\left(2y_{t}\lambda+\frac{1}{2\lambda}\right)\sigma_{1}\right\}\Psi
:=\mathcal{A}(\lambda)\Psi,
\\
\frac{\partial\Psi}{\partial t}
&=\left\{\left(\lambda+\frac{y}{\lambda}\right)\sigma_{3}-\frac{iy}{\lambda}\sigma_{2}\right\}\Psi
:=\mathcal{B}(\lambda)\Psi,
\end{aligned}\right.
\end{equation}
where
\[
\sigma_{1}=\left(\begin{matrix}0&1\\1&0\end{matrix}\right),\quad \sigma_{2}=\left(\begin{matrix}0&-i\\i&0\end{matrix}\right),\quad \sigma_{3}=\left(\begin{matrix}1&0\\0&-1\end{matrix}\right)
\]
are the Pauli matrices and $y_{t}=\frac{dy}{dt}$.
If the Lax pair satisfies the compatibility condition $\frac{\partial^2\Psi}{\partial t\partial\lambda}=\frac{\partial^2\Psi}{\partial\lambda\partial t}$,
then $y=y(t)$ is a solution of the PI equation~\eqref{PI equation}.

Under the transformation
\begin{equation}\label{eq-transform-canonical solution}
\Phi(\lambda)=\lambda^{\frac{1}{4}\sigma_{3}}\frac{\sigma_{3}+\sigma_{1}}{\sqrt{2}}\Psi(\sqrt{\lambda}),
\end{equation}
the first equation of (\ref{lax pair-I}) becomes
\begin{equation}\label{eq-fold-Lax-pair}
\frac{\partial\Phi}{\partial\lambda}=\left(\begin{matrix}y_{t}&2\lambda^{2}+2y\lambda-t+2y^2\\2(\lambda-y)&-y_{t}\end{matrix}\right)\Phi.
\end{equation}
Following \cite{Kapaev-Kitaev-1993} (see also \cite{AAKapaev-2004}), the only singularity of equation (\ref{eq-fold-Lax-pair}) is the  irregular singular point at $\lambda=\infty$, and there exist canonical solutions $\Phi_{k}(\lambda)$, $k\in\mathbb{Z}$, of (\ref{eq-fold-Lax-pair}) with the following asymptotic expansion
\begin{equation}\label{eq-canonical-solutions}
\Phi_{k}(\lambda,t)
=\lambda^{\frac{1}{4}\sigma_{3}}\frac{\sigma_{3}+\sigma_{1}}{\sqrt{2}}
\left(I+\frac{\mathcal{H}}{\lambda}+\mathcal{O}\left(\frac{1}{\lambda^2}\right)\right)e^{(\frac{4}{5}\lambda^{\frac{5}{2}}+t\lambda^{\frac{1}{2}})\sigma_{3}},
\qquad\lambda\rightarrow\infty,\quad \lambda\in\Omega_{k},
\end{equation}
uniformly for all $t$ bounded away from poles, where $\mathcal{H}=-(\frac{1}{2}y_{t}^2-2y^3-ty)\sigma_{3}$, and the canonical sectors are
$$\Omega_{k}=\left\{\lambda\in\mathbb{C}:~\arg \lambda\in \left(-\frac{3\pi}{5}+\frac{2k\pi}{5},\frac{\pi}{5}+\frac{2k\pi}{5}\right)\right\}, \qquad k\in\mathbb{Z}.$$
These canonical solutions are related by
\begin{equation}\label{eq-Stokes-matrices}
\Phi_{k+1}=\Phi_{k}S_{k},\quad S_{2k-1}=\left(\begin{matrix}1&s_{2k-1}\\0&1\end{matrix}\right),\quad S_{2k}=\left(\begin{matrix}1&0\\s_{2k}&1\end{matrix}\right),
\end{equation}
where $s_{k}$'s are the \emph{Stokes multipliers}, and independent of $\lambda$ and $t$ according to the isomonodromy condition. The Stokes multipliers are subject to the constraints
\begin{equation}\label{eq-constraints-stokes-multipliers}
s_{k+5}=s_{k}\quad \text{and}\quad s_{k}=i(1+s_{k+2}s_{k+3}),
\qquad k\in\mathbb{Z}.
\end{equation}
Moreover, regarding $s_{k}$'s as functions of $(t,y(t),y'(t))$,   they also satisfy (see~\cite[eq.~(13)]{AAKapaev-1988})
\begin{equation}\label{eq-sk-s-k-relation}
s_{k}\left (t,y(t),y'(t)\right)=-\overline{s_{-k}\left (\bar{t},\overline{y(t)},\overline{y'(t)}\right )}, \quad k\in\mathbb{Z},
\end{equation}
where $\bar{z}$ stands for the complex conjugate of $z$.
From~\eqref{eq-constraints-stokes-multipliers}, it is readily seen that, generically, two of the Stokes multipliers determine all others. The derivation of (\ref{eq-canonical-solutions}), (\ref{eq-Stokes-matrices}), and (\ref{eq-constraints-stokes-multipliers}), along with more details about the Lax pairs, can be found in~\cite{FAS-2006}.

For simplicity, we make a further transformation:
\begin{equation}\label{eq-def-hat{Phi}}
\hat{\Phi}(\lambda,t)=G(\lambda,t)\Phi(\lambda,t)
\end{equation}
with
\begin{equation}\label{def-G(lambda,t)}
G(\lambda,t)
=\begin{bmatrix}0&1\\ 1&-\frac{1}{2}\left(-y_{t}+\frac{1}{2(\lambda-y)}\right)\end{bmatrix}
(\lambda-y)^{\frac{\sigma_{3}}{2}}.
\end{equation}
Then, $\hat{\Phi}(\lambda,t)$ satisfies
\begin{equation}\label{eq-system-hat-Phi}
\frac{d}{d\lambda}\hat{\Phi}(\lambda,t)
=\begin{bmatrix}0&2\\V(\lambda,t)&0\end{bmatrix}\hat{\Phi}(\lambda,t),
\end{equation}
where
\[
2V(\lambda,t)=\left\{\begin{aligned}
&y_{t}^{2}+4\lambda^{3}+2\lambda t-2y t-4y^{3}-\frac{y_{t}}{\lambda-y}+\frac{3}{4}\frac{1}{(\lambda-y)^2}, \qquad &t\neq p,\\
&4\lambda^3+2p\lambda-28H,\qquad &t=p,
\end{aligned}\right.
\]
$p$ is a pole of the PI solution and $H$ is the parameter in the Laurent series~\eqref{eq-Laurent-series}.

Now, the approximation of $\Phi(\lambda,t)$ in \eqref{eq-canonical-solutions} is transformed to
\begin{equation}\label{eq-canonical-solutions-hat-Phi}
\hat{\Phi}_{k}(\lambda,t)
=\frac{\lambda^{-\frac{3}{4}\sigma_{3}}}{\sqrt{2}}
\begin{bmatrix}1&-1\\1&1\end{bmatrix}
\left(I+\mathcal{O}(\lambda^{-\frac{1}{2}})\right)
e^{(\frac{4}{5}\lambda^{\frac{5}{2}}+t\lambda^{\frac{1}{2}})\sigma_{3}}
\end{equation}
as $\lambda\rightarrow\infty$ with $\lambda\in\Omega_{k}$ when $t$ is bounded away from $p$.
When $t\to p$, according to \cite{Bertola-Tovbis-2013}, the above approximation \eqref{eq-canonical-solutions-hat-Phi} should be replaced by
\begin{equation}\label{eq-canonical-solutions-near-pole}
\hat{\Phi}_{k}(\lambda,p)
=\frac{\lambda^{-\frac{3}{4}\sigma_{3}}}{\sqrt{2}}
\begin{bmatrix}-i&-i\\-i&i\end{bmatrix}
\left(I+\mathcal{O}(\lambda^{-\frac{1}{2}})\right)
e^{(\frac{4}{5}\lambda^{\frac{5}{2}}+p\lambda^{\frac{1}{2}})\sigma_{3}}
\end{equation}
as $\lambda\rightarrow\infty$ with $\lambda\in\Omega_{k}$; see \cite[Corollary A.8]{Bertola-Tovbis-2013}.

\section{Asymptotics of PI solutions as $t\to-\infty$}
\label{sec:AppB}

The asymptotic behavior of the PI solution as $t\to-\infty$ has been extensively studied by
Kapaev~\cite{AAKapaev-1988} using the monodromy theory.
There are three types of real solutions for the PI equation classified by \eqref{eq-classifed-by-stokes}. They satisfy
the following asymptotic behavior:
\begin{enumerate}[(A)]
\item a two-parameter family of solutions, oscillating about the parabola $y=-\sqrt{-t/6}$ and satisfying
\begin{equation}\label{eq-behavior-type-A}
y=-\sqrt{-\frac{t}{6}}+d\,(-t)^{-\frac{1}{8}}\cos{\left[24^{\frac{1}{4}}\left(\frac{4}{5}(-t)^{\frac{5}{4}}-\frac{5}{8}d^2 \log(-t)+\varphi\right)\right]}+\mathcal{O}\left (t^{-\frac{5}{8}}\right )
\end{equation}
as $t\rightarrow-\infty$, where
    \begin{equation}\label{eq-parameter-d-theta}
  \left\{\begin{aligned}
  &24^{\frac{1}{4}}d^{2}=-\frac{1}{\pi}\log{|s_{0}|},\\
  &24^{\frac{1}{4}}\theta=-\arg{s_{3}}-24^{\frac{1}{4}}d^{2}\left(\frac{19}{8}\log{2}+\frac{5}{8}\log{3}\right)-\frac{\pi}{4}-\arg\Gamma\left(-i\frac{24^{\frac{1}{4}}}{2}d^{2}\right);
\end{aligned}\right.
\end{equation}
\item a one-parameter family of solutions (termed {\it separatrix solutions}), satisfying
    \begin{equation}\label{eq-behavior-type-B}
      y(t)=y_{0}(t)-\frac{h}{4\sqrt{\pi}}24^{-\frac{1}{8}}(-t)^{-\frac{1}{8}}\exp\left\{-\frac{4}{5}24^{\frac{1}{4}}(-t)^{\frac{5}{4}}\right\}\left(1+\mathcal{O}\left(|t|^{-\frac{5}{4}}\right)\right)
    \end{equation}
    as $t\rightarrow-\infty$, where $y_{0}(t)=\sqrt{\frac{-t}{6}}\left[1+\mathcal{O}\left((-t)^{-\frac{5}{2}}\right)\right]$ and
    \begin{equation}\label{eq-parameter-h}
     h=s_{1}-s_{4};
    \end{equation}
\item a two-parameter family of solutions, having infinitely many double poles on the negative real axis and satisfying
    \begin{equation}\label{eq-behavior-type-C}
      \frac{1}{y(t)+\sqrt{{-t}/{6}}}\sim \frac{\sqrt{6}}{2}\sin^{2}\left\{\frac{2}{5}24^{1/4}(-t)^{\frac{5}{4}}+\frac{5}{8}\rho\log(-t)+\sigma\right\}\quad \text{as}\quad t\rightarrow-\infty,
    \end{equation}
    where
    \begin{equation}\label{eq-parameter-rho-sigma}
    \left\{\begin{aligned}
    \rho&=\frac{1}{2\pi}\log(|s_{2}|^{2}-1)=\frac{1}{2\pi}\log(|1+s_{2}s_{3}|)=\frac{1}{2\pi}\log|s_{0}|,\\
    \sigma&=\frac{19}{8}\rho\log{2}+\frac{5}{8}\rho\log{3}+\frac{1}{2}\arg\Gamma\left(\frac{1}{2}-i\rho\right)-\frac{\pi}{4}+\frac{1}{2}\arg{s_{2}}.
    \end{aligned}\right.
    \end{equation}
\end{enumerate}

\section{Proof of Remark \ref{remark-tilde-h}}
\label{tilde-h-odd-term-vanish}

If, in eq.~\eqref{eq-tilde-h-behave-like}, we assume that $\tilde{h}(\xi)$ behaves as
\begin{equation}\label{eq-tilde-h-behave-like-appendix}
\tilde{h}(\xi)=\sum\limits_{s=0}^{n-1}\frac{\tilde{h}_{s}}{\xi^{2s}}+\sum\limits_{s=1}^{n-1}\frac{\bar{h}_{s}}{\xi^{2s-1}}+\mathcal{O}(\xi^{-2n})
\end{equation}
and at least one of $\bar{h}_{s}$ is nonzero. Without loss of generality, we may assume $\bar{h}_{1}\neq 0$. Under this assumption, \eqref{eq-tilde-An-def} should be replaced by
\begin{equation}\label{eq-tilde-An-def-2}
\mathcal{\tilde{A}}_{n}(\eta,\xi)=\sum\limits_{s=1}^{n}\frac{\tilde{a}_{s}(\eta)}{\xi^{2s}}+\sum\limits_{s=2}^{n}\frac{\bar{a}_{s}(\eta)}{\xi^{2s-1}}.
\end{equation}
By comparing the coefficients of $\xi^{-s}, s=0,1,\cdots,2n$ on both sides of \eqref{eq-tilde-An-coeff-compare}, and using the analyticity property, we can recursively determine all of $\tilde{a}_{s}(\eta)$ for $s=1,\dots,n$ and $\bar{a}_{s}(\eta)$ for $s=2,\dots,n$. In particular, we have
\begin{equation}\label{eq-tilde-bar-a-integral}
\begin{split}
\tilde{a}_{1}(\eta)&=\frac{1}{2\eta^{\frac{1}{2}}}\int_{0}^{\eta}\frac{\tilde{\psi}(t)}{t^{\frac{1}{2}}}dt+\frac{1}{2\eta^{\frac{1}{2}}}\int_{z_{1}}^{z}\frac{4\tilde{h}_{0}}{\tilde{f}(z)^{\frac{1}{2}}}dz,\\
\bar{a}_{2}(\eta)&=\frac{1}{2\eta^{\frac{1}{2}}}\int_{z_{1}}^{z}\frac{4\bar{h}_{1}}{\tilde{f}(z)^{\frac{1}{2}}}dz,
\end{split}
\end{equation}
where $\tilde{f}(z)$ and $\tilde{\psi}(\eta)$ are defined in \eqref{eq-def-tilde-f} and \eqref{eq-def-tilde-psi}, respectively.
Following the analysis in Section~\ref{sec:proof-theorem}, we derive the asymptotic behavior of the Stokes multiplier $s_{1}$ in the form
\begin{equation}\label{eq-s1-asym-appendix}
s_{1}\sim -i\exp\left\{\sum\limits_{s=0}^{\infty}\frac{2\beta_{s}}{\xi^{2s-1}}+\sum\limits_{s=2}^{\infty}\frac{2\bar{\beta}_{s}}{\xi^{2s-2}}\right\}
\end{equation}
as $\xi\to\infty$, where $\beta_{s}=\lim\limits_{\eta\to\infty}\eta^{\frac{1}{2}}\tilde{a}_{s}(\eta)$ and $\bar{\beta}_{s}=\lim\limits_{\eta\to\infty}\eta^{\frac{1}{2}}\bar{a}_{s}(\eta)$. Observing that $\eta^{\frac{1}{2}}d\eta=\tilde{f}(z)^{\frac{1}{2}}dz$, we have
\begin{equation}\label{eq-beta-limit-expression}
\beta_{1}=\frac{1}{2}\int_{z_{1}}^{\infty e^{i\theta}}\frac{\tilde{\psi}(\eta(z))}{\eta(z)}\tilde{f}(z)^{\frac{1}{2}}dz+\frac{1}{2\eta^{\frac{1}{2}}}\int_{z_{1}}^{\infty e^{i\theta}}\frac{4\tilde{h}_{0}}{\tilde{f}(z)^{\frac{1}{2}}}dz
\end{equation}
and
\begin{equation}\label{eq-bar-beta-limit-expression}
\bar{\beta}_{2}=\frac{1}{2\eta^{\frac{1}{2}}}\int_{z_{1}}^{\infty e^{i\theta}}\frac{4\tilde{h}_{1}}{\tilde{f}(z)^{\frac{1}{2}}}dz,
\end{equation}
where $\theta\in[0,\frac{4\pi}{5}]$.
From \eqref{eq-beta-limit-expression}, we see that, by appropriately choosing $\tilde{h}_{0}$, it is possible to ensure that $\Re{\beta}_{1}=0$. However, since $\tilde{h}_{1}\neq 0$, it follows that $\Re{\bar{\beta}_{2}}\neq 0$. Therefore, the asymptotic approximation of $s_{1}$ in \eqref{eq-s1-asym-appendix} contradicts the fact that $s_{1}=i$ for the real tritronqu\'{e}e solution of the PI equation. Therefore, all the coefficients $\bar{h}_{s}$ of the odd terms in \eqref{eq-tilde-h-behave-like-appendix} must vanish.

\section*{Acknowledgements}
All authors are grateful to the two anonymous reviewers for their valuable comments, which significantly improved the paper.

The work of Wen-Gao Long was partially supported by the National Natural Science Foundation of China [Grant No. 12401094], the Natural Science Foundation of Hunan Province [Grant No. 2024JJ5131] and the Outstanding Youth Fund of Hunan Provincial Department of Education [Grant No. 23B0454]. The work of Yu-Tian Li was supported in part by the National Natural Science Foundation of China
[Grant no. 11801480].

\end{document}